\documentclass{article}

\usepackage{times}
\usepackage{graphicx} 
\usepackage{subfigure} 

\usepackage{natbib}

\usepackage{algorithm}
\usepackage{algorithmic}

\usepackage{booktabs}
\usepackage{array,multirow}
\usepackage{arydshln}
\usepackage{hyperref}


\usepackage[accepted]{icml2017} 

\usepackage{amsmath}
\usepackage{amsthm}
\usepackage{color}
\usepackage{mathtools,amsfonts,enumerate}
\usepackage{thmtools, thm-restate}
\usepackage[titletoc,title]{appendix}

\usepackage{enumitem}

\newtheorem{theorem}{Theorem}

\newenvironment{sproof}{%
  \proof}{\endproof}
  
\newcommand{\vol}{{\operatorname{vol}}}
\newcommand{\ex}{\operatorname{ex}}
\newcommand\myeq{\mathrel{\overset{\makebox[0pt]{\mbox{\normalfont\tiny\sffamily def}}}{=}}}
\newcommand{\polylog}{\operatorname{polylog}}
\newcommand{\tab}{.\hskip.1in}



\usepackage{tikz}
\usetikzlibrary[graphs]
\usetikzlibrary{matrix}
\usetikzlibrary{shapes,arrows,chains,positioning}
\usepgflibrary{arrows} 
\usepgflibrary[arrows] 
\usepgflibrary[plotmarks] 
\usetikzlibrary{arrows} 

\usetikzlibrary{decorations,decorations.pathmorphing,decorations.markings}
\usetikzlibrary{fit,calc,positioning,decorations.pathreplacing,matrix}
\begin{document} 

\twocolumn[

\icmltitle{Capacity Releasing Diffusion for Speed and Locality}

\begin{icmlauthorlist}
\icmlauthor{Di Wang}{eecs}
\icmlauthor{Kimon Fountoulakis}{stat}
\icmlauthor{Monika Henzinger}{uvien}
\icmlauthor{Michael W. Mahoney}{stat}
\icmlauthor{Satish Rao}{eecs}
\end{icmlauthorlist}

\icmlaffiliation{eecs}{EECS, UC Berkeley, Berkeley, CA, USA}
\icmlaffiliation{stat}{ICSI and Statistics, UC Berkeley, Berkeley, CA, USA}
\icmlaffiliation{uvien}{Computer Science, University of Vienna, Vienna, Austria}
\icmlcorrespondingauthor{Di Wang}{wangd@eecs.berkeley.edu}

\vskip 0.3in
]
\printAffiliationsAndNotice{}
\begin{abstract} 

Diffusions and related random walk procedures are of central importance in many areas of machine learning, data analysis, and applied mathematics.
Because they spread mass agnostically at each step in an iterative manner, they can sometimes spread mass ``too aggressively,'' thereby failing to find the ``right'' clusters.
We introduce a novel \emph{Capacity Releasing Diffusion (CRD) Process}, which 
is both faster and stays more local than the classical spectral diffusion process. 
As an application, we use our CRD Process to develop an improved local algorithm for graph clustering. Our local graph clustering method can find local clusters in a model of clustering where one begins the CRD Process in a cluster whose vertices are connected better internally than externally by an $O(\log^2 n)$ factor, where $n$ is the number of nodes in the cluster. 
Thus, our CRD Process is the first local graph clustering algorithm that is not subject to the well-known quadratic Cheeger barrier. 
Our result requires a certain smoothness condition, which we expect to be an artifact of our analysis. 
Our empirical evaluation demonstrates improved results, in particular for realistic social graphs where there are moderately good---but not very good---clusters.

\end{abstract}
\vspace{-7mm}
\section{Introduction}
\tikzset{
dot/.style={circle,draw=black,thick,inner sep=2pt,fill},
}

Diffusions and related random walk procedures are of central
importance in many areas of machine learning, data analysis, and
applied mathematics, perhaps most conspicuously in the area of
spectral
clustering~\citep{Cheeger69,DH73,luxburg05_survey,ShiMalik00_NCut},
community detection in
networks~\citep{NJW01_spectral,WS05_spectralSDM,LLDM09_communities_IM,Jeub15},
so-called manifold learning~\citep{BN03,MOV12_JMLR}, and PageRank-based
spectral ranking in web ranking~\citep{PB99,Gle15_SIREV}.  Particularly
relevant for our results are local/personalized versions
of PageRank \citep{JW03} and local/distributed versions of spectral
clustering \citep{ST04,ACL06,AP09}. These latter algorithms
can be used to find provably-good small-sized clusters in very large graphs
without even touching the entire
graph;   
they have been implemented and
applied to billion-node graphs~\citep{SRFM16_VLDB}; and they have been
used to characterize the clustering and community structure in a wide
range of social and information
networks~\citep{LLDM09_communities_IM,Jeub15}.


Somewhat more formally, we will use the term \emph{diffusion} on a graph to refer to a process that spreads mass among vertices by sending mass along edges step by step according to some rule.  
With this interpretation, classical \emph{spectral diffusion} spreads mass by distributing the mass on a given node equally to the neighbors of that node in an iterative manner.
A well-known problem with spectral methods is that---due to their close relationship with random walks---they sometimes spread mass ``too aggressively,'' and thereby they don't find the ``right'' partition.  
In theory, this can be seen with so-called
Cockroach Graph~\citep{guatterymiller98,luxburg05_survey}.  In
practice, this is seen by the extreme sensitivity of spectral methods
to high-degree nodes and other structural heterogeneities in
real-world graphs constructed from very noisy
data~\citep{LLDM09_communities_IM,Jeub15}.  More generally, it is
well-known that spectral methods can be very sensitive to a small
number of random edges, e.g., in small-world graphs, that ``short
circuit'' very distant parts of the original graph, as well as other noise
properties in realistic data.  Empirically, this is well-known to be a particular problem when
there are moderately good---but not very good---clusters in the data,
a situation that is all too common in machine learning and data
analysis applications~\citep{Jeub15}.

Here, we introduce a novel \emph{Capacity Releasing Diffusion (CRD) Process} to address this problem.  
Our CRD Process is a type of diffusion that spreads mass according to a carefully-constructed push-relabel rule, using techniques that are well-known from flow-based graph algorithms, but modified here to release the capacity of edges to transmit mass.  
Our CRD Process has better properties with respect to limiting the spread of mass inside local well-connected clusters.  It does so with improved running time properties.
We show that this yields improved local clustering algorithms, both in worst-case theory and in empirical practice.

\vspace{-3mm}
\subsection{Capacity Releasing Diffusion (CRD)}




We start by describing the {\em generic} CRD process in Figures~\ref{fig:two} and~\ref{fig:one}, which lays down the dynamics of spreading mass across the graph. 
Importantly, this dynamical process is independent of any particular task to which it may be applied. Later (in Section~\ref{sxn:flow}) we also present a concrete CRD algorithm for the specific task of local clustering that exploits the dynamics of the generic CRD process\footnote{The relation between the generic CRD process and the CRD algorithm for local graph clustering is analogous to the relation between local random walks and a local spectral graph partitioning algorithm such as that of \citet{ACL06}.}.
\begin{figure}[h]
\fbox{\parbox{.47\textwidth}{
{\small
\begin{enumerate}
\item
Begin with $2d(u)$ mass at a single (given) vertex $u$. 
\item
Repeatedly perform a {CRD step}, and then double the mass at every vertex.
\end{enumerate}
}
}}
\caption{{\bf Generic Capacity Releasing Diffusion (CRD) Process}}
\label{fig:two}
\end{figure}
\vskip-.15in

The entire CRD process (Figure~\ref{fig:two}) repeatedly applies the generic CRD inner process (which we call a {\em CRD step}), and then it doubles the amount of mass at all vertices between invocations. A CRD step starts with each vertex $u$ having mass $m(u)\leq 2d(u)$, where $d(u)$ is the degree of $u$, and spreads the mass so that at the end each vertex $u$ has mass $m(u)\leq d(u)$.
\begin{figure}[h]
\fbox{\parbox{.46\textwidth}{
{\small
Each vertex $v$ initially has mass $m(v)\leq 2d(v)$ and needs to spread the mass so that $m(v) \leq d(v) \forall v$.

\begin{enumerate}
\item
Each vertex $v$ maintains a label, $l(v)$, initially set to~$0$. 
\item
Each edge $e$ maintains $m(e)$, which is the mass moved from $u$ to $v$ ,where $m(v,u) = -m(u,v)$.
We note that both $l(v)$ and $m(u,v)$ are variables local to this inner process, where
$m(v)$ evolves across calls to this process.
\item
An edge, $e = (u,v)$, is {\em eligible} with respect to the labeling if it is {\em downhill}: $l(u) > l(v)$, and the mass $m(u,v)$ moved along $(u,v)$ is less than $l(u)$. 
\item
The {\em excess} of a vertex is $\ex(v)\myeq\max(0,m(v)-d(v))$.
\end{enumerate}
The inner process continues as long as there is a vertex $v$ with $\ex(v)>0$; it either sends mass over an eligible edge incident to $v$, or if there is none, then it increases $l(v)$ by $1$. 
}
}}
\caption{{\bf Generic CRD Inner Process.}}
\label{fig:one}
\end{figure}
Observe that, essentially, each CRD step spreads the mass to a region of roughly twice the volume comparing to the previous step.
%

The generic CRD inner process (Figure~\ref{fig:one}) implements a modification of the  classic ``push-relabel'' algorithm \citep{GT88,GT14} for routing a source-sink flow.
The crucial property of our process (different from the standard push-relabel) 
is that edge capacity is made available to the process slowly
by \emph{releasing}.  That is, we only allow $l(u)$ units of mass
 to move across any edge $(u,v)$, where $l(u)$ is the label (or height) maintained by the CRD inner process. 
Thus, edge capacity is {\em released} to
allow mass to cross the edge as
the label of the endpoint rises.  
As we will see, this difference is critical to the theoretical and empirical performance of the CRD algorithm.


\vspace{-3mm}
\subsection{Example: Classical Versus Capacity Releasing}



To give insight into the differences between classical spectral diffusion and our CRD Process, consider the graph in Figure~\ref{fig:path}.  
There is a ``cluster'' $B$, which consists of $k$ paths, each of length $l$, joined at a common node $u$. 
There is one edge from $u$ to the rest of the graph, and we assume the other endpoint $v$ has very high degree such that the vast majority of the mass arriving there is absorbed by its neighbors in $\overline{B}$.
While idealized, such an example is not completely unrealistic~\citep{LLDM09_communities_IM,Jeub15}.

\begin{figure}[ht]
\begin{tikzpicture}[yscale=.5]

\node[dot,label={{\bf $u$}}] (u) at (0,0) {};
\node[dot,label={{\bf $v$}}] (v) at (-1,0) {};

\foreach \u/\x/\y in {t11/1/1,t12/2/1,t15/4/1,m11/1/0,m12/2/0,m15/4/0,b11/1/-1,b12/2/-1,b15/4/-1,}
{
\node[dot] (\u) at (\x,\y) {};
}

\foreach \u/\x/\y in {tdots/3/1,mdots/3/0,bdots/3/-1}
{
\node (\u) at (\x,\y) {{\LARGE $\cdots$}};
}

\foreach \u/\x/\y in {dts1/1.5/.6,dts2/1.5/-.4,dts3/-1.5/.3}
{
\node (\u) at (\x,\y) {{\LARGE $\vdots$}};
}

\foreach \u/\x/\y in {l1/-1.5/1,l2/-1.5/-.5/,l3/-1.5/-1}
{
\node (\u) at (\x,\y) {};
}

\foreach \u/\v in {t11/t12,m11/m12,b11/b12,u/t11,u/m11,u/b11,u/v,v/l1,v/l2,v/l3}
{
\path[thick] (\u) edge (\v);
}

\draw [decoration={brace,raise=5pt},decorate,line width=1pt]
   (1,1.1) -- (4,1.1);

\node at (3,2) {$\ell$};

\draw [decoration={brace,raise=5pt},decorate,line width=1pt]
   (4.2,1) -- (4.2,-1);

\draw (0,2) arc(150:210:4);

\node at (.5,2) {$B$};
\node at (-1,2) {$\overline{B}$};

\node at (4.8,0) {$k$};

\end{tikzpicture}
\caption{An example where Capacity Releasing Diffusion beats classical spectral diffusion by $\Omega(\ell)$.}
\label{fig:path}
\end{figure}
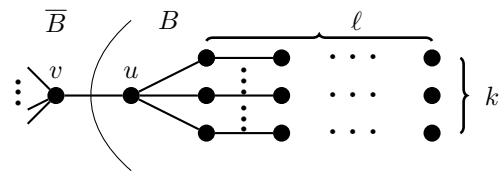

Consider first classical spectral diffusion, with a random walk starting from some vertex in $B$. 
This process requires $\Omega(\ell^2)$ steps to
spread probability mass to a constant fraction of the nodes on the paths, and
in this many steps, the expected number of times to visit $u$ is
$\Omega(\ell)$. Because of the edge to $v$, each time we visit $u$, we
have a $\Omega(1/k)$ chance of leaving $B$. Thus, when $\ell$ is
$\Omega(k)$, the random walk is expected to leave $B$ and never return,
i.e., the classical diffusion will leak out all the probability
mass before even spreading beyond a constant fraction of $B$.


Consider next our CRD Process, starting with mass at the vertex $u \in B$ (which would be a worst-case
starting node in $B$ for CRD). 
Assume that at some point the mass is spread along $z$ neighboring vertices on each of the $k$ paths.  To
continue the spread to $2z$ vertices in the next CRD step, the labels will be raised to
(at most) $2z$ to allow the mass to spread over the path of length $2z$.  This
enables the spread along the paths, but it only releases a capacity of $2z$ to the exiting edge $(u,v)$. Since in this call, a total of $2zk$
mass is in the set $B$,  at most $1/k$ of the mass escapes.  After 
$\log \ell$ CRD steps, the mass is spread over all the $k$ length-$\ell$ 
paths, and only a  $(2\log \ell)/k$ fraction of the mass has
escaped from $B$.  Thus if $\ell = \Omega(k)$, as before, 
a factor of $k/\log \ell$ less mass has escaped from $B$ with the CRD Process than with
the classical diffusion process.

Without the releasing, however, the mass escaping $B$ would be large, as even raising the label of vertex $u$ to 1 would allow an arbitrary amount of mass to leak out. 

Finally, note that the $\ell^2$ mixing time 
 makes spectral
diffusions a $\Omega(\ell)$ factor slower than CRD. This drawback
 of spectral techniques
 can
perhaps be resolved using sophisticated methods such as
 evolving sets~\citep{AP09}, though it comes easily with
CRD. 

\vspace{-3mm}
\subsection{Our Main Results}

We provide theoretical and empirical evidence that our CRD algorithm is superior to classical diffusion methods at finding clusters, with respect to noise tolerance, recovery accuracy, cut conductance, and running time. 
Here the {\em cut conductance} $\Phi(S)$ of a cut $(S, \bar S)$
is $\Phi(S):= \frac{|E(S,\bar{S})|}{\min(\mbox{vol}(S),\mbox{vol}(V\setminus
  S))}$, where $E(S,\bar{S})$ denotes the set of edges between $S$ and
$\bar{S}$,\footnote{Unless otherwise noted, when speaking of the conductance of
a cut $S$, we assume $S$ to be the side of minimum volume.} and the {\em volume} $\vol(S)$ is the
sum of the degrees of the vertices in $S$. 
 In all these measures, we break the quadratic Cheeger barrier  for
classical diffusions (explained below) while 
presenting a {\em local} algorithm (i.e., an algorithm whose
running time depends on the volume of the cluster found and {\em not} of the whole~graph).

Our first main result (Section~\ref{sxn:flow}) presents a CRD algorithm and its running time. The CRD algorithm is a parameterized specialization of the generic CRD Process, where we limit the maximum label of vertices, as well as the maximum edge capacity. We prove that this specialization is efficient, in that it runs in time linear in the total mass and the label limit, and it either succeeds in spreading the mass or it leaves all unspread mass at  nodes with  high label.  
This property is analogous to the ispoerimetric capacity control provided by local spectral methods, and it is important for locating cluster bottlenecks.  We use this crucially in our context to find low conductance clusters.

Our second main result (Section~\ref{sxn:local_cluster}) concerns the use of the CRD algorithm to find good local clusters in large graphs.
Our result posits the existence of a ``good'' cluster $B$, which satisfies certain conditions (Assumption \ref{assum:bottleneck} and \ref{assum:connectmore}) that naturally capture the notion of a local structure. 
The rather weak Assumption \ref{assum:bottleneck} states that $B$'s internal connectivity $\phi_S(B)$ (see Section~\ref{sxn:local_cluster} for definition) is
a constant factor better (i.e., larger) than the conductance $\phi(B)$. 
Assumption \ref{assum:connectmore} states that we have a
smoothness condition which needs
that any subset $T \subset B$ has $\polylog(\vol(B))$ times more neighbors in $B-T$
than in $V-B$. 
Under these conditions, we can recover $B$ starting from any vertex in~$B$.

Both assumptions formalize the idea that the signal of the local structure is
 stronger than the noise of the cluster by some moderately large factor. 
More specifically, Assumption \ref{assum:bottleneck} roughly says that the weakest signal of any subset of $B$ is
a constant times stronger than the average noise of $B$; and Assumption \ref{assum:connectmore} roughly says the signal of any subset is $\polylog(\vol(B))$ times stronger than the noise of the subset.

We note that Assumption \ref{assum:bottleneck} is significantly weaker than
the factor in \citet{ALM13}, where it is shown how to  localize a 
cluster $B$ such that  $\phi_S(B) \ge \sqrt{\phi(B)}$. Their condition is considerably
stricter than our condition on the ratio between $\phi_S(B)$ and $\phi(B)$,
especially when $\phi(B)$ is small, as is common. Their
algorithm relies on proving that a classical diffusion starting at a
typical node keeps most of its mass inside of $B$. However, they do not need
something like our smoothness condition.

With the additional smoothness condition, we break the dependence on
$\sqrt{\phi(B)}$ that is central to all approaches using spectral
diffusions, including \citet{ALM13}, for the first time with a local
algorithm.  In particular, comparing to \citet{ALM13}, under their
parameter settings (but with the smoothness condition), we identify
a cluster with $\sqrt{\phi(B)}$ times less error, and we have a
$1/\sqrt{\phi(B)}$ speedup in running time. This improvement is (up to a $\log \ell$-factor) consistent with
the behavior in the example of the previous section where the improvement
is $k/\log \ell =  O(1/(\sqrt{1/\phi(B)} \log \ell))$ as  $\phi(B) = 1/k\ell$ and $\ell = \Omega(k)$.

We note that with the additional smoothness condition, our theoretical results hold for any starting node $v_s$ in $B$
, in contrast to prior spectral-based results which only work when starting from a ``good'' node (where only a constant fraction of the nodes in $B$ are good). We expect the smoothness condition to be an artifact of our analysis, i.e., similar results actually hold when starting at good nodes in $B$, even without this assumption.

Our third main result (Section~\ref{sxn:empirical}) is an empirical illustration of our method.
We consider several social and information networks studied previously that are particularly challenging for spectral methods.
In particular, while graphs that have upward-sloping NCPs (Network Community Profiles) have good small clusters~\citep{LLDM09_communities_IM,Jeub15}, denser social networks with flat NCPs do not have any \emph{very}-good conductance clusters of any size.
They do, however, often have \emph{moderately}-good clusters, but these are very difficult for spectral methods to identify~\citep{Jeub15}.
Our empirical results show that our CRD-based local clustering algorithm is better able to identify and extract in a strongly local running time moderately good quality clusters from several such social networks.

\vspace{-3mm}
\subsection{Previous Work: Low Conductance Cuts, Diffusions, and Multicommodity Flow}


Spectral algorithms for computing eigenvalues use some variant of repeated matrix multiplication, which for graphs is a type of classical diffusion. 
For the Laplacian of a graph, the convergence rate is $O(1/\lambda_2)$, where $\lambda_2$ is the second smallest eigenvalue by of this matrix.  
The Lanczos method improves this rate to $O(\sqrt{1/\lambda_2})$  by cleverly and efficiently combining different iterations of the diffusions. See, e.g., \citet{OSV12} for more details on this.

One application of such a computation is to find a low conductance cut in a graph.
The second eigenvector for $G$ can  be used to find a cut of conductance $O(\frac{1}{\lambda_2})$ \citep{Cheeger69,DH73}. 
Let $\phi_G$ be the minimum conductance in the graph.
In his work, Cheeger already observed that random-walk based diffusion can make a $\Theta(1/\sqrt{\phi_G})$ error in estimating the conductance, informally known as the (quadratic) {\em Cheeger barrier}, and illustrated in our example.
This, combined with the fact that $\lambda_2 = O(1/\phi_G)$, gives a spectral method to find an $O(\phi^{1/2}_G)$ conductance cut in $G$.

Spielman-Teng \yrcite{ST04} used local versions of diffusions (i.e., those with small support) to compute recursive decompositions efficiently, and then they used locality to produce linear time partitioning algorithms. 
Andersen, Chung and Lang \yrcite{ACL06} developed an improved version that
adjusts the standard diffusion by having mass settled at vertices,
resulting in significantly improved bounds to $O(\sqrt{\phi_G \log n})$
on the conductance of the returned cut $(B, \bar B)$
in time $\tilde{O}(\frac{\vol(B)}{\phi_G})$.
Allen-Zhu, Lattanzi and Mirrokni \yrcite{ALM13} analyzed the behavior of the same algorithm under certain well-connected conditions.
 The EvoCut algorithm of Andersen and Peres
\yrcite{AP09} improved the running time of this method to 
$\tilde{O}(\frac{\vol(B)}{\sqrt{\phi_G}}).$
As all these methods are based on spectral diffusion, their performance with
respect to conductance is subject to the Cheeger barrier.
%
Other processes have been proposed for random
walks that mix faster, e.g., non-backtracking random walks
\citep{Alon07}.  These too are subject to the Cheeger barrier
asymptotically.
Our result is the first to break this barrier in any broad setting, where classical spectral methods fail.

There is also a line of research which gives various generalizations of Cheeger's inequality \citep{LRTV12,LGT12,KLLGT13,KLL16}. On a high level, these results prove better quality for the cut or $k$-way cut produced by spectral methods by looking beyond the second eigenvalue of the graph Laplacian. In particular, the cut produced by spectral methods can have better quality than what the standard Cheeger's bound implies when the gap between the $k$-th eigenvalue and the previous eigenvalues is large. On a high level, such gap suggests the existence of a partition of the graph into $O(k)$ clusters that are much better connected internally than the cuts along the clusters, and our work qualitatively agrees with these results in the local setting. We note that these results also suffer a quadratic loss, although in terms of the $k$-th eigenvalue instead of the second eigenvalue.

Multicommodity flow based methods are able to find clusters of conductance $O(\phi_G\log n)$ \citep{LR88}, bypassing the limit inherent in purely spectral methods.
A semidefinite programming approach, which can be viewed as combining multicommodity flow and spectral methods, yields cuts of conductance $O(\phi_G\sqrt{\log n})$ \citep{ARV04}.
These algorithms are very non-local, e.g., in the sense that their running time depends on the size of the whole graph, and it is not clear that they can be meaningfully localized.
%
%
%
%
%
We do, however, use well-known flow-based ideas in our algorithm.
In particular, recall that push-relabel and in general ``shortest-path'' based methods have a celebrated history in algorithms \citep{GT14}.  Using levels to release capacity, however, as we do in our algorithm, is (to our knowledge) completely new.



\newcommand{\first}{\operatorname{first}}
\newcommand{\last}{\operatorname{last}}
\newcommand{\current}{\operatorname{current}}
\vspace{-3mm}
\section{Capacity Releasing Diffusion}
\label{sxn:flow}
In this section, we describe our algorithm which implements a specific version of the generic CRD Process.
In particular, it has some modifications for efficiency reasons, and it terminates the diffusion when it finds a bottleneck during the process. The algorithm iteratively calls a subroutine {\em CRD-inner}, which implements one CRD step. 

For efficiency reasons, {\em CRD-inner} doesn't necessarily carry out a full CRD step, where a {\em  full CRD step} means every node $u$ has at most $d(u)$ mass at termination. In particular, {\em CRD-inner} only makes a certain amount of ``effort'' (which is tuned by a parameter $\phi$) to spread the mass, and if there is a  bottleneck in the form of
a cut that requires ``too much effort'' for the diffusion to get through, then {\em CRD-inner} may leave excess mass on nodes, i.e., $m(v)>d(v)$ at termination. More specifically, given $\phi$, {\em CRD-inner} guarantees to overcome any  bottleneck
 of conductance $\Omega(\phi)$, i.e., if it doesn't carry out a full CRD step, then it returns a cut of conductance $O(\phi)$ as a certificate. We will discuss {\em CRD-inner} with more detail in Section~\ref{sxn:crd-inner}.
 
\vspace{-3mm}
 \subsection{CRD Algorithm}

Given a starting node $v_s$, the CRD algorithm (Algorithm~\ref{alg:CRD}) is essentially the CRD Process starting from $v_s$, as described in Figure~\ref{fig:two}. The algorithm takes as input a parameter $\phi$, which is used to tune {\em CRD-inner}. Since {\em CRD-inner} may stop short of a full CRD step due to a bottleneck, we remove any excess mass remaining on nodes after calling {\em CRD-inner}. Due to the excess removal, we may discard mass as the algorithm proceeds. In particular, as we start with $2d(v_s)$ mass, and double the amount after every CRD step, the amount of mass 
after the $j$-th doubling is $2d(v_s)\cdot 2^j$ if we never remove excess. 
When the actual amount of mass is significantly smaller than $2d(v_s)\cdot 2^j$, there must be a bottleneck $(K, \bar K)$ during the last CRD step, such that $K$ contains
a large fraction of the mass (and of the excess) and such that {\em CRD-inner} cannot push any more
mass from $K$ to $\bar K$. We terminate the CRD algorithm when this happens, as 
the mass and, as we can show, thus the volume of $K$ must be large, while there are few edges
between $K$ and $\bar K$. Thus $K$ is a  low-conductance cluster around $v_s$. Formally, the algorithm takes input parameters $\tau$ and $t$, and it terminates either when the amount of mass drops below $\tau(2d(v_s)\cdot 2^j)$ after iteration $j$, or after iteration $t$ if the former never happens. It returns the mass on the  nodes (i.e., $m(\cdot)$), as well as the cut $K$ returned by the last {\em CRD-inner} call in the former termination state.  

The running time of our CRD algorithm is local (i.e., proportional to the volume of the region it spreads mass to, rather than the volume of the entire graph). In particular, each {\em CRD-inner} call takes time linear in the amount of mass, 
 and as the amount of mass increases geometrically before we terminate, the running time of the CRD algorithm is dominated by the last {\em CRD-inner} call.
\noindent
\begin{algorithm}[t]
{\small
\caption{{\em CRD Algorithm}($G,v_s,\phi,\tau,t$)}
\label{alg:CRD}
~\\
\fbox{
\parbox{0.47\textwidth}{
\tab {\bf Initialization}: \\
\phantom{\tab} $m(v_s)=d(v_s),m(v)=0,\forall v\neq v_s$; $j=0$.\\
\tab {\bf For} $j=0,\ldots, t$\\
\tab \tab $m(v)\leftarrow 2m(v), \forall v$ \\
\tab \tab {\bf Assertion:} $m(v)\leq 2d(v), \forall v$\\
\tab \tab Call {\em CRD-inner} with $G,m(\cdot),\phi$, get cut $K_j$\\
\phantom{\tab \tab} ($K_j$ empty if CRD-inner finishes full CRD step).\\
\tab \tab $m(v)\leftarrow \min(m(v),d(v)), \forall v$\\
\tab \tab {\bf If } $\sum_v m(v)\leq \tau(2d(v_s)\cdot 2^j)$ \\ 
\tab \tab \tab {\bf Return} $m(\cdot)$, and $K\myeq K_j$. {\bf Terminate.}\\
\tab {\bf End For}\\ 
\tab {\bf Return} $m(\cdot), K\myeq K_t$.
}}
}
\end{algorithm}
\vspace{-3mm}
\subsection{CRD Inner Procedure}
\label{sxn:crd-inner}
Now we discuss the {\em CRD-inner} subroutine (Algorithm~\ref{alg:CRD-inner}), which aims to carry out one CRD step. In particular, each node $v$ has $m(v)\leq 2d(v)$ mass at the beginning, and {\em CRD-inner} tries to spread the mass so each node $v$ has $m(v)\leq d(v)$ mass at the end. Not surprisingly, as the CRD step draws intuition from flow routing, our {\em CRD-inner} can be viewed as a modification of the classic push-relabel algorithm. 

As described in Figure~\ref{fig:one}, we maintain a label $l(v)$ for each node $v$, and the net mass being pushed along each edge. Although the graph is undirected, we consider each edge $e=\{u,v\}$ as two directed arcs $(u,v)$ and $(v,u)$, and we use $m(u,v)$ to denote the net mass pushed from $u$ to $v$ (during the current {\em CRD-inner} invocation). Under this notation, we have $m(u,v)=-m(v,u)$. We denote $|m(\cdot)|\myeq \sum_v m(v)$ as the total amount of mass, $\ex(v)\myeq \max(m(v)-d(v),0)$ as the amount of excess on $v$, and we let $\phi$ be the input parameter tuning the ``effort'' made by {\em CRD-inner} (which will be clear shortly). 

As noted earlier, to make {\em CRD-inner} efficient, we deviate from the generic CRD step. In particular, we make the following modifications:
\begin{enumerate}[topsep=0pt,itemsep=-0.5ex,partopsep=1ex,parsep=1ex]
\item The label of any node can be at most $h=3\log |m(\cdot)|/\phi$. If $v$ is raised to level $h$, but still has excess mass, {\em CRD-inner} leaves the excess on $v$, and won't work on $v$ any more. Formally, $v$ is {\em active} if $l(v)<h$ and $\ex(v)>0$. We keep a list $Q$ of all active nodes, and terminate {\em CRD-inner} when $Q$ is empty. 
\item In addition to capacity releasing, the net mass along any edge can be at most $C=1/\phi$. Formally, for an arc $(v,u)$, its {\em effective capacity} is $\hat{c}(v,u)\myeq \min(l(v),C)$, and its {\em residual capacity} is $r_m(v,u)\myeq \hat{c}(v,u) - m(v,u)$. The arc $(v,u)$ is {\em eligible} iff $l(v)>l(u)$ (i.e., downhill) and $r_m(v,u)>0$. We only push mass along eligible arcs.
\item We enforce $m(v)\leq 2d(v)$ for all $v$ through the execution. This is assumed at the start, and we never push mass to $v$ if that
would result in $m(v) > 2d(v)$.
\end{enumerate}
The parameter $\phi$ in the first two modifications limits the work done by {\em CRD-inner}, and it captures how hard {\em CRD-inner} will try to carry out the full CRD step (e.g., when $h,C$ are infinitely large, {\em CRD-inner} implements the full CRD step). Given any $\phi$, {\em CRD-inner} makes enough effort by allowing
nodes to have height up to $h$ and by using the above edge capacities to overcome bottlenecks of conductance $\Omega(\phi)$ during the diffusion process.  If it doesn't finish the full CRD step, then it returns a cut of conductance $O(\phi)$ as certificate. 

Another motivation of tuning with parameter $\phi$ is to keep the diffusion local. Since {\em CRD-inner} doesn't try to get through low-conductance bottlenecks, the diffusion tends to spread mass over well-connected region, instead of pushing mass out of a bottleneck.
This guarantees that the work performed is linear in the 
volume of the returned cluster, i.e., that it is a strongly local algorithm, since only a small fraction of mass can leak out of the cluster.

The third modification guarantees when {\em CRD-inner} terminates with a lot of excess on nodes, the excess won't be concentrated on a few nodes, as no node
can have more mass than twice its degree, and thus the cut returned must contain a large region.

\noindent 

\begin{algorithm}[t]
{\small
\caption{{\em CRD-inner}($G$,$m(\cdot)$,$\phi$) }
\label{alg:CRD-inner}
~\\
\fbox{
\parbox{0.47\textwidth}{
\tab {\bf Initialization:}\\
\tab \tab $\forall \{v,u\}\in E$, $m(u,v)=m(v,u)=0$; $\forall v$, $l(v)=0$\\
\tab \tab $Q=\{v|m(v)>d(v)\}$, $h=\frac{3\log|m(\cdot)|}{\phi}$\\
\tab {\bf While} $Q$ is not empty\\
\tab \tab Let $v$ be the lowest labeled node in $Q$.\\
\tab \tab {\em Push/Relabel}$(v)$. \\
\tab \tab {\bf If} {\em Push/Relabel}$(v)$ pushes mass along $(v,u)$\\
\tab \tab \tab {\bf If} $v$ becomes in-active, remove $v$ from $Q$\\
\tab \tab \tab {\bf If} $u$ becomes active, add $u$ to $Q$\\
\tab \tab {\bf Else If} {\em Push/Relabel}$(v)$ increases $l(v)$ by $1$\\
\tab \tab \tab {\bf If} $l(v)=h$, remove $v$ from $Q$.
}}
\fbox{
\parbox{0.47\textwidth}{
{\em Push/Relabel}$(v)$ \\
\tab {\bf If} there is any eligible arc $(v,u)$\\
\tab \tab {\em Push}$(v,u)$.\\
\tab {\bf Else}\\
\tab \tab {\em Relabel}$(v)$.
}}
\fbox{
\parbox{0.47\textwidth}{
{\em Push}$(v,u)$\\
\tab $\psi = \min\left(\ex(v),r_m(v,u),2d(u)-m(u)\right)$\\
\tab Push $\psi$ units of mass from $v$ to $u$:\\
\phantom{\tab} $m(v,u)\leftarrow m(v,u)+\psi, m(u,v)\leftarrow m(u,v)-\psi$;\\
\phantom{\tab} $m(v)\leftarrow m(v)-\psi, m(u)\leftarrow m(u)+\psi$.
}}
\fbox{
\parbox{0.47\textwidth}{
{\em Relabel}$(v)$\\
\tab $l(v)\leftarrow l(v)+1$.
}}
}
\end{algorithm}

We have the following theorem for {\em CRD-inner}.  
\begin{restatable}{theorem}{crdinner}
\label{thm:CRD-inner}
Given $G,m(\cdot)$, and $\phi\in (0,1]$, such that $|m(\cdot)|\leq \vol(G)$, and $\forall v:m(v)\leq 2d(v)$ at the start, {\em CRD-inner} terminates with one of the following cases:
\begin{enumerate}  
\item {\em CRD-inner} finishes the full CRD step: $\forall v:m(v)\leq d(v)$. 
\item There are nodes with excess, and we can find a cut $A$ 
of conductance $O(\phi)$. Moreover, $\forall v\in A: 2d(v)\geq m(v)\geq d(v)$, and $\forall v\in \bar{A}:m(v)\leq d(v)$. 
\end{enumerate}
The running time is $O(|m(\cdot)|\log (|m(\cdot)|)/\phi)$.
\end{restatable}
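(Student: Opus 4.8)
The plan is to analyze \emph{CRD-inner} as a bounded-height push--relabel algorithm and run the standard analysis, with the extra bookkeeping forced by capacity releasing and the height cap $h=3\log|m(\cdot)|/\phi$. First I would establish, by induction on the elementary operations, the invariants: $0\le m(v)\le 2d(v)$ for every $v$ (the upper bound is enforced directly by the $2d(u)-m(u)$ term in \emph{Push}); total mass is conserved; $|m(u,v)|\le \hat c(u,v)=\min(l(u),C)$; labels are non-decreasing and never exceed $h$; mass is pushed only downhill; and $l$ is a valid distance labeling of the residual graph, i.e. $l(u)\le l(w)+1$ along every residual arc $(u,w)$ --- the point being that we relabel $v$ only when every downhill residual arc at $v$ is saturated, and raising $l(v)$ only creates new residual arcs out of $v$, immediately after which $l(v)$ is one larger. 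Combining these with the observation that a node leaves $Q$ at a level below $h$ precisely when a \emph{Push} zeroes its excess (which necessarily leaves it exactly full), I would read off the terminal configuration: $m(v)\le d(v)$ when $l(v)=0$; $m(v)=d(v)$ when $1\le l(v)\le h-1$ (such a node was once overfull, never drops below $d(v)$, and its last deactivation set it exactly full); and $d(v)<m(v)\le 2d(v)$ when $l(v)=h$ (it was overfull when raised to $h$ and can no longer push).

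\emph{Running time.} Each node is relabeled at most $h$ times. A directed arc is saturated $O(h)$ times, since a saturated arc reopens only when its tail's label rises (releasing more capacity) or when mass flows back along it, and the latter can happen at most once more before the tail's label overtakes the head's. Non-saturating pushes are controlled by the usual potential $\sum_{v\ \mathrm{active}}l(v)$, which decreases by at least $1$ at each such push and increases only at relabels and saturating pushes. Charging all work against the $|m(\cdot)|$ units of mass --- each unit only descends, being escorted through $O(h)$ levels --- and the degrees of the nodes actually touched, and implementing each operation in $O(1)$ time with incidence lists and current-arc pointers, the total is $O(|m(\cdot)|\,h)=O(|m(\cdot)|\log|m(\cdot)|/\phi)$; the process halts because after finitely many relabels nothing is eligible and $Q$ empties.

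\emph{The dichotomy and the cut.} At termination $Q$ is empty, so every node has $l(v)=h$ or $\ex(v)=0$. If no node carries excess we are in Case~1. Otherwise, by the terminal structure above, \emph{every} level set $S_i:=\{v:l(v)\ge i\}$ with $1\le i\le h$ already satisfies $d(v)\le m(v)\le 2d(v)$ on $S_i$ and $m(v)\le d(v)$ on $\overline{S_i}$, so it is enough to exhibit one such $S_i$ of conductance $O(\phi)$ and return it as $A$. This is where capacity releasing does the work. Fix $i>C=1/\phi$: any edge leaving $S_i$ whose high endpoint $u$ has level strictly above $i$ carries exactly $C$ units out of $S_i$, because when $u$ was relabeled past level $i$ all its downhill arcs --- including the one to the low endpoint $w$, whose label stays below $i$ forever --- were saturated at the then-released capacity $\min(i,C)=C$, and no mass ever returns uphill. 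Hence the only boundary edges of $S_i$ that can carry less than $C$ outward are those joining levels $i$ and $i-1$; let $D_i$ be their number. Accounting the net outward flow across $S_i$ in two ways, it is at least $C\bigl(|E(S_i,\overline{S_i})|-2D_i\bigr)$ and at most $m_0(S_i)\le 2\vol(S_i)$, so $|E(S_i,\overline{S_i})|\le 2\phi\,\vol(S_i)+2D_i$. Finally, among the $\Theta(\log|m(\cdot)|/\phi)$ levels with $i>C$ there is one, $i^{*}$, with $D_{i^{*}}\le\phi\,\vol(S_{i^{*}})$: otherwise $\vol(S_{i-1})\ge\vol(S_i)+D_i>(1+\phi)\vol(S_i)$ at each such level, pushing $\vol(S_C)$ past $|m(\cdot)|$, which is impossible since $\vol(S_1)\le|m(\cdot)|-\!\sum_v\ex(v)$ (itself a consequence of the terminal structure). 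Taking $A:=S_{i^{*}}$ gives $|E(A,\overline A)|=O(\phi\,\vol(A))$, i.e. conductance $O(\phi)$, the hypothesis $|m(\cdot)|\le\vol(G)$ ensuring $A$ is, up to constants, the smaller side.

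\emph{Main obstacle.} The invariants and running time are routine; the real content is the cut extraction. The delicate points are: proving that releasing forces almost the entire boundary of a high level set to be saturated \emph{outward} at rate $1/\phi$ (the argument must treat the between-consecutive-levels edges separately and use that flow only moves downhill while labels only rise, so nothing trickles back across the cut afterward); the geometric-growth step that locates a good level; and tracking which side of the cut is smaller. This is precisely where capacity releasing is indispensable: without it, a high cut would be saturated at only $O(1)$ per edge, and the method would certify a cut of conductance $O(1)$ rather than $O(\phi)$.
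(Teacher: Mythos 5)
Your proposal follows essentially the same route as the paper's own proof: the same terminal classification of nodes by label, the same split of each level set's boundary into consecutive-level and multi-level edges with the multi-level edges forced by releasing to carry exactly $1/\phi$ outward, the same mass-conservation inequality bounding their number, the same geometric volume-growth pigeonhole over $\Theta(h)$ levels to locate $i^*$, and the same $O(|m(\cdot)|\,h)$ amortized charging of work to mass. The one point you wave at but the paper treats explicitly is the case $\vol(S_{i^*})>\vol(G)/2$, where the conductance denominator becomes $\vol(\overline{S_{i^*}})$ (the paper patches this by rerunning the volume-growth argument over levels $h/4$ to $h/2$); this is a minor edge case, not a gap in the main argument.
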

\vspace{-1em}
\begin{sproof} Let $l(\cdot)$ be the labels of nodes at termination. First note all nodes with excess must be on level $h$. Moreover, since we only push from a node $v$ if it has excess (i.e., $m(v)\geq d(v)$), once a node has at least $d(v)$ mass, it always has at least $d(v)$ mass. Note further that $l(v)\geq 1$ if and only if $\ex(v)>0$ at some point during the process. Thus, we know the following: $l(v)=h\Rightarrow 2d(v)\geq m(v)\geq d(v)$; $h>l(v)\geq 1\Rightarrow m(v)=d(v)$; $l(v)=0 \Rightarrow m(v)\leq d(v)$.

Let $B_i = \{v| l(v) = i\}$. Since the total amount of mass $|m(\cdot)|$ is at most the volume of the graph, if $B_0=\emptyset$ or $B_h=\emptyset$, then we have case $(1)$ of the theorem.

Otherwise, both $B_h$ and $B_0$ are non-empty. Let the {\em level cut} $S_i = \cup_{j=i}^h B_j$ be the set of nodes with label at least $i$. We have $h$ level cuts $S_h,\ldots,S_1$, where $\vol(S_h)\geq 1$, and $S_j\subseteq
S_i$ if $j>i$. The conductance of these cuts, when we go from $S_h$ down to $S_1$, lower bounds how much the volume grows from $S_h$ to $S_1$. If all these cuts have $\Omega(\phi)$ conductance, by our choice of $h$, the volume of $S_1$ will be much larger than $|m(\cdot)|$. This gives a contradiction, since any node $v\in S_1$ has $m(v)\geq d(v)$, and we don't have enough mass. It follows that at least one of the level cuts has conductance $O(\phi)$.

As to the running time, the graph $G$ is given implicitly, and we only acess the list of edges of a node when it is active. Each active node $v$ has $d(v)$ mass, and the total amount of mass is $|m(\cdot)|$, so the algorithm touches a region of volume at most $|m(\cdot)|$. Thus, the running time has linear dependence on $|m(\cdot)|$.
Using an amortization argument one can show that the total work of the subroutine (in the worst case) is  $O(|m(\cdot)| h) = O(|m(\cdot)| \log (|m(\cdot)|)/\phi)$.
\end{sproof}
There are certain details in the implementation of {\em CRD-inner} that we don't fully specify, such as how to check if $v$ has any outgoing eligible arcs, and how (and why) we pick the active node with lowest label. These details are important for the running time to be efficient, but don't change the dynamics of the diffusion process. Most of these details are standard to push-relabel framework, and we include them (as well as the detailed proof of Theorem~\ref{CRD-inner}) in Appendix~\ref{sxn:crd-appendix}.

\vspace{-3mm}
\section{Local Graph Clustering}
\label{sxn:local_cluster}
In this section, we provide theoretical evidence that the CRD algorithm can identify a good local cluster in a large graph if there exists one around the starting node. We define
 {\em set conductance, $\phi_S(B)$} (or {\em internal connectivity}) of a set $B\subset V$ is
the minimum conductance of any cut in the induced subgraph on $B$.



Informally, for a ``good'' cluster $B$, any inside bottleneck should have larger conductance than $\phi(B)$, and nodes in $B$ should be more connected to other nodes inside $B$ than to nodes outside. We capture the intuition formally as follows. 
\vspace{-0.75em}
\begin{restatable}{assumption}{assumone}
\label{assum:bottleneck}
$
\sigma_1 \myeq \frac{\phi_S(B)}{\phi(B)} \geq \Omega(1)
$.
\end{restatable}
\vspace{-0.75em}
\begin{restatable}{assumption}{assumtwo}
\label{assum:connectmore}
There exists $\sigma_2\geq \Omega(1)$, such that any $T\subset B$ with $vol_B(T)\leq \vol_B(B)/2$ satisfies
\[
\frac{|E(T,B\setminus T)|}{|E(T,V\setminus B)|\log \vol(B)\log \frac{1}{\phi_S(B)}}\geq \sigma_2.
\] 
\end{restatable}
\vspace{-0.75em}

Following prior work in local clustering, we formulate the goal as a promise problem, where we assume the existence of an unknown target good cluster $B\subset V$ satisfying Assumption~\ref{assum:bottleneck} and ~\ref{assum:connectmore}. In the context of local clustering, we also assume $\vol(B)\leq \vol(G)/2$. Similar to prior work, we assume the knowledge of a node $v_s\in B$, and rough estimates (i.e., within constant factor) of the value of $\phi_S(B)$ and $\vol(B)$. We use the CRD algorithm with $v_s$ as the starting node, $\phi=\Theta(\phi_S(B))$, $\tau = 0.5$, and $t=\Theta(\log\frac{\vol(B)}{d(v_s)})$. With the parameters we use, the algorithm will terminate due to too much excess removed, i.e., $|m(\cdot)|\leq \tau(2d(v_s)\cdot 2^j)$ after some iteration $j$. The region where the diffusion spreads enough mass will be a good approximation of $B$. 
\begin{restatable}{theorem}{recover}
\label{thm:recover}
Starting from any $v_s\in B$, with the above parameters, when the CRD algorithm terminates, if we let $S=\{v|m(v)\geq d(v)\}$, then we have:
\begin{enumerate}
\vspace{-1.25em}
\item $\vol(S\setminus B)\leq O(\frac{1}{\sigma})\cdot \vol(B)$
\vspace{-0.5em}
\item $\vol(B\setminus S)\leq O(\frac{1}{\sigma})\cdot \vol(B)$
\end{enumerate}
\vspace{-1.25em}
where $\sigma = \min(\sigma_1,\sigma_2)\geq \Omega(1)$, with the $\sigma_1,\sigma_2$ from Assumption~\ref{assum:bottleneck} and~\ref{assum:connectmore}. The running time is $O(\frac{\vol(B)\log \vol(B)}{\phi_S(B)})$.
\end{restatable}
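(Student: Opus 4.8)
The plan is to follow the mass through the $t=\Theta(\log(\vol(B)/d(v_s)))$ CRD steps, maintaining an invariant that pins down (a) the total mass, (b) the region it occupies, and (c) how much has escaped $B$, and then to read off $S=\{v:m(v)\ge d(v)\}$ at the termination forced by the $\tau=0.5$ rule. First I would fix the constants: by Assumption~\ref{assum:bottleneck}, $\phi_S(B)\ge\sigma_1\phi(B)\ge\Omega(\phi(B))$, so I can pick $\phi=\Theta(\phi_S(B))$ small enough that the $O(\phi)$ conductance of any certificate produced by Theorem~\ref{thm:CRD-inner} is strictly below $\phi_S(B)$; hence no cut lying (essentially) inside $B$ can ever be the certificate, and the only way \emph{CRD-inner} can stop short of a full CRD step is by ``seeing'' the cut $\partial B$, whose conductance $\phi(B)=O(\phi)$ it is allowed to refuse. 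The invariant I would carry by induction on $j$: after step $j$, as long as the mass inside $B$ is below $0.9\,\vol(B)$, \emph{CRD-inner} has spread the mass inside $B$ almost perfectly --- the reached region $T_j\subseteq B$ is saturated ($m(v)=d(v)$ on $T_j$) irrespective of which $v_s\in B$ we started from (since $B$'s high internal conductance makes the diffusion fill $B$ uniformly), $\vol(T_j)=\Theta(2^j d(v_s))$ (it roughly doubles, tracking the doubling of mass), and the only loss is the mass pushed across $\partial B$.

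The crux is bounding the cumulative leakage $L=\sum_j L_j$, where $L_j$ is the mass pushed out of $B$ during step $j$. In a single step the net mass \emph{CRD-inner} pushes across an edge from $T_j$ to $V\setminus B$ is controlled by the releasing/effective capacity ($\le\min(l(u),C)$, hence $\le h=3\log|m(\cdot)|/\phi$), and there are $|E(T_j,V\setminus B)|$ such edges, so $L_j\le|E(T_j,V\setminus B)|\cdot h$ up to constants. Assumption~\ref{assum:connectmore} is calibrated exactly so that the factor $\log\vol(B)\log(1/\phi_S(B))$ it inserts absorbs this per-edge blow-up (the $\log|m(\cdot)|$ inside $h$, and the number of CRD steps), converting $|E(T_j,V\setminus B)|$ into a $1/(\sigma_2\cdot\text{logs})$ fraction of the internal frontier $|E(T_j,B\setminus T_j)|$; summing over $j$ the internal frontiers telescope --- each edge of $B$ lies on the frontier during only $O(\log\vol(B))$ consecutive steps, or one instead organizes the sum over the level cuts within each \emph{CRD-inner} call as in the proof of Theorem~\ref{thm:CRD-inner} --- to $O(\vol(B))$ (or $O(\vol(B)\log\vol(B))$, which the extra log in Assumption~\ref{assum:connectmore} still kills), giving $L=O(\vol(B)/\sigma_2)$. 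Since $L_j$ is also small \emph{relative} to the current mass (it is a $1/(\sigma_2\cdot\text{logs})$ fraction of $\vol(T_j)\approx|m(\cdot)|$), the mass genuinely keeps (nearly) doubling, which is what legitimizes the invariant of the previous paragraph.

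With these two facts, termination is easy to control. While $\vol(T_j)\le 0.9\,\vol(B)$ the mass is $(1-o(1))\,2d(v_s)2^j>\tau\cdot 2d(v_s)2^j$, so the algorithm does not stop; once $T_j$ occupies a constant fraction of $B$ the next doubling produces $\approx 2\vol(B)$ mass, which cannot fit inside $B$ (nodes cap at $2d(v)$ and the excess is then trimmed), and since $\partial B$ has conductance $O(\phi)$ that \emph{CRD-inner} declines to cross, \emph{CRD-inner} returns a cut, excess removal drops $|m(\cdot)|$ below $\tau\cdot 2d(v_s)2^j$ within $O(1)$ further steps (and in any event by step $t$), and the algorithm terminates. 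A mass accounting at that point gives $|m(\cdot)|=\Theta(\vol(B))$ with all but $O(\vol(B)/\sigma_2)$ of it inside $B$ on the saturated region $S\cap B$, whence $\vol(S\setminus B)\le L\le O(\vol(B)/\sigma)$; and because the diffusion is forced to fill $B$ up to a deficit that is the sum of the leakage ($O(1/\sigma_2)$) and the at most $O(1/\sigma_1)$ fraction of $B$ that Assumption~\ref{assum:bottleneck} permits to stay unsaturated before a bottleneck triggers termination, $\vol(B\setminus S)\le O(\vol(B)/\sigma)$. The running time is immediate from Theorem~\ref{thm:CRD-inner}: step $j$ costs $O(2^j d(v_s)\log(2^j d(v_s))/\phi)$, and summing this geometric series up to the termination step $j^\star=O(\log(\vol(B)/d(v_s)))$ gives $O(\vol(B)\log\vol(B)/\phi_S(B))$, dominated by the last call.

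The main obstacle is the leakage estimate of the second paragraph: one must actually control how much mass the releasing-capacitated push-relabel dynamics can force across $\partial B$ in a single CRD step --- i.e., bound the labels and effective capacities of $\partial B$-incident nodes in terms of how far the diffusion has permeated $B$ --- and then verify that the precise $\log\vol(B)\log(1/\phi_S(B))$ factor in Assumption~\ref{assum:connectmore} is exactly what is needed for the sum over steps (and over levels within a step) to telescope to $O(\vol(B)/\sigma)$ rather than $O(\vol(B)\,\polylog(\vol(B))/\sigma)$. A secondary difficulty is verifying that the reached region $T_j$ genuinely stays (almost entirely) inside $B$ and saturated --- this is where Assumption~\ref{assum:bottleneck} and the tuning of $\phi$ against $\phi_S(B)$ (so that internal bottlenecks of $B$ are never mistaken for the boundary $\partial B$) are used.
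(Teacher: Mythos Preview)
Your high-level plan mirrors the paper's: track mass through the CRD steps, bound the per-step leakage out of $B$, sum, and read off the termination state. The termination and running-time paragraphs are fine. The gap is precisely where you flag it: your leakage estimate in the second paragraph does not go through. Bounding the mass crossing each edge of $E(T_j,\bar B)$ by the uniform cap $\min(l(u),C)\le C=1/\phi$ (let alone by $h$) and then invoking Assumption~\ref{assum:connectmore} gives at best $L_j\le |E(T_j,B\setminus T_j)|/(\sigma_2\,\phi\,\log\vol(B)\,\log(1/\phi))$; even granting your telescoping of the frontiers to $O(\vol(B)\log\vol(B))$, this leaves $\sum_j L_j=O\bigl(\vol(B)/(\sigma_2\,\phi\,\log(1/\phi))\bigr)$, which is too large by a factor $1/(\phi_S(B)\log(1/\phi_S(B)))$. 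Your parenthetical accounting---that the $\log\vol(B)\log(1/\phi_S(B))$ in Assumption~\ref{assum:connectmore} absorbs ``the $\log|m(\cdot)|$ inside $h$'' plus ``the number of CRD steps''---drops the dominant $1/\phi$ in $h=3\log|m(\cdot)|/\phi$ entirely; nothing in Assumption~\ref{assum:connectmore} is sized to eat a raw $1/\phi$.

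What is actually needed---and what the paper isolates as Lemma~\ref{lemma:leakage}---is to refuse the uniform cap and instead exploit that a boundary edge leaving a node at label $i$ carries at most $\min(i,1/\phi)$ mass. Writing $B_i=\{v\in B:l(v)=i\}$ and $S_i=\{v\in B:l(v)\ge i\}$ at the end of the $j$-th \emph{CRD-inner} call, one has $L_j\le\sum_{i\ge 1}|E(B_i,\bar B)|\min(i,1/\phi)=\sum_{i=1}^{1/\phi}|E(S_i,\bar B)|$, and Assumption~\ref{assum:connectmore} is then applied to each level cut $S_i$ separately. A charging argument (splitting $E(S_i,B\setminus S_i)$ into edges crossing one level versus more, as in the proof of Theorem~\ref{thm:CRD-inner}, but now used to lower-bound the mass $|M_j(S_i)|$ that started in $S_i$ by $\Omega(i)\cdot|E(S_i,B\setminus S_i)|$) yields a per-level charge of $O(1/(i\,\sigma_2\log\vol(B)\log(1/\phi)))$ on each unit of mass; the resulting harmonic sum $\sum_{i=1}^{1/\phi}1/i=O(\log(1/\phi))$ is exactly what the $\log(1/\phi_S(B))$ in Assumption~\ref{assum:connectmore} is there to cancel, giving $L_j\le O(M_j/(\sigma_2\log\vol(B)))$ while $M_j\le\vol_B(B)/2$, and then the $\log\vol(B)$ absorbs the number of CRD steps. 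Your level-cut aside (``as in the proof of Theorem~\ref{thm:CRD-inner}'') points in the right direction, but it has to be executed to bound \emph{leakage per level} rather than to telescope frontiers under a uniform edge cap; the latter route cannot recover the missing $\phi_S(B)$ factor.
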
 
The theorem states that the cluster recovered by the CRD algorithm has both good (degree weighted) precision and recall with respect to $B$; and that the stronger the ``signal'' (relative to the ``noise''), i.e., the larger $\sigma_1,\sigma_2$, the more accurate our result approximates $B$.

If the goal is to minimize conductance, then we can run one extra iteration of the CRD algorithm after termination with a smaller value for $\phi$ (not necessarily $\Theta(\phi_S(B))$ as used in previous iterations). In this case, we have the following.
\begin{theorem}
\label{thm:conductance}
If we run the CRD algorithm for one extra iteration, with $\phi \geq \Omega(\phi(B))$, then {\em CRD-inner} will end with case $(2)$ of Theorem~\ref{thm:CRD-inner}. Let $K$ be the cut returned.  We have:
\begin{enumerate}
\vspace{-1.25em}
\item $\vol(K\setminus B)\leq O(\frac{\phi(B)}{\phi})\cdot \vol(B)$
\vspace{-0.5em}
\item $\vol(B\setminus K)\leq O(\frac{\phi(B)}{\phi_S(B)})\cdot \vol(B)$
\vspace{-0.5em}
\item $\phi(K)\leq O(\phi)$
\end{enumerate}
\vspace{-1.25em}
The running time is $O(\frac{\vol(B)\log \vol(B)}{\phi})$.
\end{theorem}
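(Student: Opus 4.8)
The plan is to use the analysis of the main algorithm (Theorem~\ref{thm:recover}) to control the state that enters the extra iteration, and then analyze the single extra call to \emph{CRD-inner} — run now with the smaller parameter $\phi\geq\Omega(\phi(B))$ — by tracking how much mass can cross the boundary $(B,\bar B)$, where the key quantitative tool is that the modified push-relabel caps the net mass on each boundary arc at $C=1/\phi$. First I would record the exiting state: by (the proof of) Theorem~\ref{thm:recover}, the surviving mass $m(\cdot)$ has $|m(\cdot)|=\Theta(\vol(B))$, satisfies $m(v)=d(v)$ on $S=\{v:m(v)\geq d(v)\}$ with $\vol(B\setminus S)=O(1/\sigma)\vol(B)$, and (from the precision part of the analysis) places at most $O(\phi(B)/\phi_S(B))\vol(B)$ mass on $\bar B$; in particular $m(B)\geq(1-O(1/\sigma))\vol(B)$. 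In the extra iteration we double the mass, so the total becomes $M:=2|m(\cdot)|\geq(2-O(1/\sigma))\vol(B)$, which exceeds $\vol(B)$ once $\sigma$ is a large enough constant, the assertion $m(v)\leq 2d(v)$ still holds, and we call \emph{CRD-inner}$(G,m(\cdot),\phi)$.

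Next I would show \emph{CRD-inner} must terminate in case~(2) of Theorem~\ref{thm:CRD-inner}, which immediately yields part~3 (a level cut $K$ with $\phi(K)=O(\phi)$, all $v\in K$ having $m(v)\geq d(v)$ and all $v\notin K$ having $m(v)\leq d(v)$). Suppose instead it finished the full CRD step, i.e.\ $m(v)\leq d(v)$ everywhere at termination. Then the support of $m(\cdot)$ has volume $\geq M>\vol(B)$, so a net amount of at least $M-\vol(B)=\Omega(\vol(B))$ of mass sits in $\bar B$; this mass must have crossed $(B,\bar B)$. But by modification~2 every arc $(v,u)$ has effective capacity $\hat c(v,u)=\min(l(v),C)\leq 1/\phi$, so the net mass crossing from $B$ to $\bar B$ is at most $|E(B,\bar B)|/\phi=\phi(B)\vol(B)/\phi$. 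For $\phi$ a sufficiently large constant times $\phi(B)$ this contradicts $M-\vol(B)=\Omega(\vol(B))$, so case~(2) holds.

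For part~1, the same boundary bound controls $\vol(K\setminus B)$: since $m(v)\geq d(v)$ on $K$, $\vol(K\setminus B)\leq m(K\setminus B)\leq m(\bar B)$ at termination, and $m(\bar B)$ is the (doubled) initial mass on $\bar B$, which is $O(\phi(B)/\phi_S(B))\vol(B)$, plus the net flow $B\to\bar B$ during this call, which is $\leq\phi(B)\vol(B)/\phi$; since $\phi\leq O(\phi_S(B))$ the first term is dominated and the sum is $O(\phi(B)/\phi)\vol(B)$. For part~2, I would run a set-conductance argument inside $B$: because $\phi\leq O(\phi_S(B))$, \emph{CRD-inner} gets through every bottleneck of the subgraph induced on $B$ — this is exactly the level-cut/volume-growth argument from the proof of Theorem~\ref{thm:CRD-inner}, now applied to $B$ rather than $G$ — so any part $T=B\setminus K$ that fails to be raised to level $\geq1$ forces the mass that "should" be filling $T$ (roughly $2\vol_B(T)$ minus inflow) to have escaped $B$ across $(B,\bar B)$ or to have been absent at the start; combining $|E(T,B\setminus T)|\geq\phi_S(B)\,\vol_B(T)$ with the $|E(B,\bar B)|/\phi$ outflow bound yields $\vol(B\setminus K)=O(\phi(B)/\phi_S(B))\vol(B)$. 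Finally, by Theorem~\ref{thm:CRD-inner} this \emph{CRD-inner} call runs in $O(M\log M/\phi)=O(\vol(B)\log\vol(B)/\phi)$, which dominates the $O(\vol(B)\log\vol(B)/\phi_S(B))$ cost of the earlier iterations since $\phi\leq O(\phi_S(B))$.

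The main obstacle I expect is part~2 (recall): making the set-conductance argument inside $B$ rigorous requires carefully using the push-relabel invariants governing flow across a downhill versus an uphill sub-level boundary to bound the inflow/outflow of a level set restricted to $B$, and — more importantly — extracting the factor $\phi(B)/\phi_S(B)$ rather than the weaker $1/\min(\sigma_1,\sigma_2)$ one gets by merely inheriting the main algorithm's recall. It is precisely here that the benefit of re-spreading $2|m(\cdot)|$ mass with the smaller parameter $\phi$ must be used: the extra iteration refills the parts of $B$ that the $\Theta(\phi_S(B))$-run left only partially saturated, so the residual miss is governed by the boundary bottleneck $\phi(B)$ against the internal connectivity $\phi_S(B)$, not by the smoothness parameter $\sigma_2$.
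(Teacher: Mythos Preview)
Your approach is essentially what the paper does: the paper's entire proof of Theorem~\ref{thm:conductance} is one sentence stating that it is ``very similar to Theorem~\ref{thm:recover}, and the conductance guarantee follows directly from Theorem~\ref{thm:CRD-inner},'' so your sketch is in fact considerably more detailed than the paper's. The skeleton --- inherit the state from Theorem~\ref{thm:recover}, double, bound the boundary flow by $|E(B,\bar B)|/\phi$, force case~(2) by a mass-count contradiction, read off $\phi(K)=O(\phi)$ from Theorem~\ref{thm:CRD-inner}, and reuse the ``Theorem~\ref{thm:CRD-inner} restricted to the induced subgraph on $B$'' argument for recall --- is exactly the intended route.

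One bookkeeping point to be careful about: you assert that at the start of the extra iteration the mass already sitting in $\bar B$ is $O(\phi(B)/\phi_S(B))\vol(B)$, citing ``the precision part of the analysis.'' What Theorem~\ref{thm:recover} and Lemma~\ref{lemma:leakage} give directly is only $O(1/\sigma)\vol(B)$ with $\sigma=\min(\sigma_1,\sigma_2)$, and $1/\sigma_2$ need not be $\le \phi(B)/\phi$. To get the clean $O(\phi(B)/\phi)$ in part~1 you should argue that the surviving mass on $\bar B$ is dominated by the leakage of the \emph{last} CRD step of the main algorithm (which is $\le |E(B,\bar B)|/\Theta(\phi_S(B))=O(\phi(B)/\phi_S(B))\vol(B)$ by the edge cap), rather than by the cumulative leakage over all steps; the geometric doubling together with excess removal makes earlier leakage negligible, but this needs to be said. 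This is a detail, not a structural gap, and once handled your argument goes through as written.
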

Now we can search for the smallest $\phi$ that gives case $(2)$ of Theorem~\ref{thm:CRD-inner}, which must give a cut of conductance within an $O(1)$ factor of the best we can hope for (i.e., $\phi(B)$). If we search with geometrically decreasing $\phi$ values, then the running time is $O(\vol(B)\log\vol(B)/\phi(B))$.

Theorem~\ref{thm:recover} and~\ref{thm:conductance} hold due to the particular flow-based dynamics of the CRD algorithm, which tends to keep the diffusion local, without leaking mass out of a bottleneck. 

Formally, for each {CRD} step, we can bound the total amount of mass that starts on nodes in $B$, and leaves $B$ at any point during the diffusion. We have the following lemma, a sketch of the proof of which is given. We include the full proof in Appendix~\ref{sxn:clustering-appendix}.
\begin{restatable}{lemma}{leakage}
\label{lemma:leakage}
In the $j$-th {CRD} step, let $M_j$ be the total amount of mass in $B$ at the start, and let $L_j$ be the amount of mass that ever leaves $B$ during the diffusion.  Then 
$
L_j\leq O(\frac{1}{\sigma_2 \log \vol(B)})\cdot M_j
$, when $M_j\leq \vol_B(B)/2$; and $L_j\leq O(\frac{1}{\sigma_1})\cdot M_j$, when $M_j\geq \vol_B(B)/2$.
\end{restatable}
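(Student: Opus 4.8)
The plan is to track the mass that escapes from $B$ during a single CRD step, using the labels (heights) of nodes as the key structural object, exactly as in the proof sketch of Theorem~\ref{thm:CRD-inner}. Fix the $j$-th CRD step, with $M_j$ mass starting in $B$. Mass leaves $B$ only by being pushed across the boundary edge set $E(B,V\setminus B)$, and the effective capacity of an arc $(v,u)$ is $\min(l(v),C)$ with $C=1/\phi$ and $\phi=\Theta(\phi_S(B))$. So the total mass that ever leaves $B$ satisfies $L_j \le \sum_{(v,u)\in E(B,\bar B)} \min(l_{\mathrm{end}}(v),C)$, where it suffices to use the final label $l_{\mathrm{end}}(v)$ of each boundary node $v\in B$ since labels only increase. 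The heart of the argument is therefore to bound, for each height threshold $i$, the number of boundary edges incident to nodes of $B$ with label $\ge i$ — i.e., to relate $|E(T_i, V\setminus B)|$ to $\mathrm{vol}_B(T_i)$, where $T_i := \{v\in B : l(v)\ge i\}$ — and then to sum over $i$.

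First I would establish the structural facts about level cuts \emph{inside} $B$. Restrict attention to the induced subgraph on $B$: the sets $T_1\supseteq T_2\supseteq\cdots\supseteq T_h$ are nested, and the push-relabel dynamics guarantee that whenever a node in $B$ is raised from level $i$ to $i+1$, all its downhill eligible arcs into $B$ are saturated up to their effective capacity; this forces the boundary-within-$B$ edges $E(T_{i+1}, B\setminus T_{i+1})$ to carry a lot of mass, hence $\mathrm{vol}_B(T_{i+1})$ to grow geometrically relative to $\mathrm{vol}_B(T_i)$ unless the internal cut $(T_{i+1}, B\setminus T_{i+1})$ has conductance $O(\phi)$. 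Since $\phi = \Theta(\phi_S(B))$ and by definition no cut inside $B$ has conductance below $\phi_S(B)$, we get that as long as $\mathrm{vol}_B(T_i)\le \mathrm{vol}_B(B)/2$, the volume $\mathrm{vol}_B(T_i)$ decays geometrically in $i$ (shrinking by a constant factor every $O(1/\phi \cdot \phi_S(B)) = O(1)$ levels). Combined with $h = O(\log|m(\cdot)|/\phi) = O(\log\mathrm{vol}(B) \cdot \log(1/\phi_S(B)))$ under the chosen parameters, this yields $\sum_i \mathrm{vol}_B(T_i) \le O(1)\cdot M_j$ in the regime $M_j \le \mathrm{vol}_B(B)/2$ (all nodes with positive label hold at least $d(v)$ mass, so $\mathrm{vol}_B(T_1) \le M_j$), and more crudely in the regime $M_j \ge \mathrm{vol}_B(B)/2$ one bounds each $\mathrm{vol}_B(T_i) \le \mathrm{vol}_B(B) \le 2M_j$.

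Next I would convert these internal-volume bounds into boundary-edge bounds via Assumption~\ref{assum:connectmore}. For each $T_i$ with $\mathrm{vol}_B(T_i)\le\mathrm{vol}_B(B)/2$, the assumption gives $|E(T_i, V\setminus B)| \le \frac{1}{\sigma_2 \log\mathrm{vol}(B)\log(1/\phi_S(B))}\,|E(T_i, B\setminus T_i)| \le \frac{1}{\sigma_2 \log\mathrm{vol}(B)\log(1/\phi_S(B))}\,\mathrm{vol}_B(T_i)$. Then
\[
L_j \;\le\; \sum_{i=1}^{h} |E(T_i, V\setminus B)| \;\le\; \frac{1}{\sigma_2\log\mathrm{vol}(B)\log(1/\phi_S(B))}\sum_{i=1}^{h}\mathrm{vol}_B(T_i),
\]
where the first inequality is the standard "layer-cake" rewriting of $\sum_{(v,u)\in E(B,\bar B), v\in B} l_{\mathrm{end}}(v) = \sum_{i\ge1}|E(T_i,\bar B)|$ (with the truncation at $C$ only helping). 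Plugging in $\sum_i \mathrm{vol}_B(T_i) = O(M_j)$ and cancelling the $\log$ factor against $h=O(\log\mathrm{vol}(B)\log(1/\phi_S(B)))$ gives $L_j \le O\!\big(\frac{1}{\sigma_2\log\mathrm{vol}(B)}\big)M_j$ when $M_j\le\mathrm{vol}_B(B)/2$. For the case $M_j\ge\mathrm{vol}_B(B)/2$, Assumption~\ref{assum:connectmore} only applies to the small-volume layers, so I would split the sum: the layers with $\mathrm{vol}_B(T_i)\le\mathrm{vol}_B(B)/2$ are handled as above and contribute $O(M_j/\sigma_2)$; for the (at most $O(h/?)$) layers with $\mathrm{vol}_B(T_i)>\mathrm{vol}_B(B)/2$ one instead bounds $|E(T_i,\bar B)|$ directly by $\phi(B)\cdot\mathrm{vol}(B) = O(\phi(B) M_j)$ and notes there are few such layers because $\mathrm{vol}_B$ of the complement-in-$B$ layers must then shrink geometrically, so their total contribution is $O(\phi(B)\,h\,M_j) = O(M_j/\sigma_1)$ using $\phi = \Theta(\phi_S(B)) = \Theta(\sigma_1\phi(B))$ and $h=O((1/\phi)\log|m|)$... this requires care, so I would more directly use $\phi(B)\le\phi_S(B)/\sigma_1$ and the bound $L_j\le \sum_i|E(T_i,\bar B)| \le h\cdot\phi(B)\mathrm{vol}(B)$ refined by the geometric decay of the \emph{co-layers}, landing at $O(M_j/\sigma_1)$.

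The main obstacle I anticipate is the bookkeeping in the $M_j\ge\mathrm{vol}_B(B)/2$ regime, where Assumption~\ref{assum:connectmore} is unavailable for the bulk layers and one must fall back on Assumption~\ref{assum:bottleneck} ($\phi_S(B)\ge\Omega(1)\phi(B)$) together with the height bound to argue only a constant number of "effective" layers contribute — essentially re-deriving, inside $B$, the geometric-growth-of-level-cuts argument of Theorem~\ref{thm:CRD-inner} but applied to the complementary sets $B\setminus T_i$ to show they cannot all be large. A secondary subtlety is justifying that no mass leaks across the boundary "for free" before the relevant labels rise, which follows because the effective capacity $\min(l(v),C)$ is zero when $l(v)=0$, so a boundary node must be relabeled to at least $1$ before any mass crosses — and by the structural facts a node reaches label $\ge1$ only after accumulating excess, tying leakage to the level-cut structure.
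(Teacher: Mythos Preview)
Your layer-cake identity $L_j \le \sum_{i\ge 1}|E(T_i,\bar B)|$ (truncated at $C=1/\phi$) and the use of Assumption~\ref{assum:connectmore} to pass to $|E(T_i,B\setminus T_i)|$ are both exactly right, and match the paper. But there is a genuine gap in each case.

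\textbf{Case $M_j\ge\vol_B(B)/2$.} You are making this much harder than it is. No layer analysis is needed: simply use the hard cap $C=1/\phi$ on every boundary arc. There are $|E(B,\bar B)|=\phi(B)\vol(B)$ such edges, so $L_j\le \phi(B)\vol(B)/\phi = O(\vol(B)\,\phi(B)/\phi_S(B)) = O(\vol(B)/\sigma_1)$, and since $M_j\ge\vol_B(B)/2=\Omega(\vol(B))$ you are done. Your attempted co-layer decay argument is unnecessary and, as you yourself flag, delicate.

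\textbf{Case $M_j\le\vol_B(B)/2$.} Here your geometric-decay claim is wrong, and fixing the rate does not rescue the argument. The internal conductance guarantee only gives $|E(T_i,B\setminus T_i)|\ge\phi_S(B)\,\vol_B(T_i)$; translated into level-to-level growth this is at best a factor $(1+\Theta(\phi_S(B)))$ per level, i.e.\ a constant factor every $\Theta(1/\phi_S(B))$ levels, \emph{not} every $O(1)$ levels. (Your ``$O(1/\phi\cdot\phi_S(B))=O(1)$'' calculation conflates the algorithm's parameter $\phi$ with the growth rate, which is governed by $\phi_S(B)$ alone.) With the correct rate you get $\sum_{i=1}^{1/\phi}\vol_B(T_i)=O(M_j/\phi_S(B))$, and plugging into your displayed inequality leaves you with
\[
L_j \;\le\; O\!\left(\frac{M_j}{\sigma_2\,\phi_S(B)\,\log\vol(B)\,\log(1/\phi_S(B))}\right),
\]
which is too weak by a factor of $\phi_S(B)\log(1/\phi_S(B))$. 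There is also a second issue you elide: the level-to-level volume growth is driven only by the edges that cross \emph{exactly one} level ($z_1(i)$ in the paper's notation), whereas the conductance lower bound controls $z_1(i)+z_2(i)$; so even the $(1+\phi_S(B))$ rate is not immediate.

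The paper closes this gap by a charging argument that treats each level $i$ according to whether one-level edges $z_1(i)$ or multi-level edges $z_2(i)$ dominate $|E(T_i,B\setminus T_i)|$. When $z_1(i)$ dominates, the charge lands on absorbed mass at the $B_i$-endpoints, and these charges are \emph{disjoint} across $i$. When $z_2(i)$ dominates, saturation of the multi-level arcs forces $|M_j(S_i)|\ge\Omega(i)\,|E(T_i,B\setminus T_i)|$, so the per-unit charge at level $i$ carries an extra $1/i$ factor; summing over $i=1,\dots,1/\phi$ gives a harmonic $O(\log(1/\phi))$, which is exactly what cancels the $\log(1/\phi_S(B))$ in Assumption~\ref{assum:connectmore}. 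Your volume-sum approach has no mechanism to produce this cancellation.
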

\vspace{-1em}
\begin{sproof}
We have two cases, corresponding to whether the diffusion already spread a lot of mass over $B$. 

In the first case, if $M_j \geq \vol_B(B)/2$, then we use the upper bound $1/\phi$ that is enforced on the net mass over any edge to limit the amount of mass that can leak out. In particular $L_j \leq O(\vol(B)\phi(B)/\phi_S(B))$, since there are $\vol(B)\phi(B)$ edges from $B$ to $\bar{B}$, and $\phi=\Theta(\phi_S(B))$ in {\em CRD-inner}. As $M_j \geq \Omega(\vol(B))$, we have
$L_j\leq O(\frac{1}{\sigma_1})\cdot M_j$.

The second case is when $M_j \leq \vol_B(B)/2$. In this case, a combination of Assumption \ref{assum:connectmore} and capacity releasing controls the leakage of mass. Intuitively, there are still many nodes in $B$ to which the diffusion can spread mass. For the nodes in $B$ with excess on them, when they push their excess, most of the downhill directions go to nodes inside $B$. As a consequence of capacity releasing, only a small fraction of mass will leak out.
\end{sproof}
\vspace{-1em}
Theorem~\ref{thm:recover} and~\ref{thm:conductance} follow from straightforward analysis of the total amount of leaked mass at termination. We sketch the ideas for the proof of Theorem~\ref{thm:recover}, with the full proof in Appendix~\ref{sxn:clustering-appendix}.
\vspace{-1em}
\begin{sproof}
Since we use $\phi=\Theta(\phi_S(B))$ when we call {\em CRD-inner}, the diffusion will be able to spread mass over nodes inside $B$, since there is no bottleneck with conductance smaller than $\phi_S(B)$ in $B$.

Thus, before every node $v$ in $B$ has $d(v)$ mass on it (in which case we say $v$ is {\em saturated}), there will be no excess on nodes in $B$ at the end of a {CRD} step. Consequently, the amount of mass in $B$ only decreases (compared to the supposed $2d(v_s)\cdot 2^j$ amount in the $j$-th CRD step) due to mass leaving $B$.

As long as the total amount $M_j$ of mass in $B$ at the start of a CRD step is less than $\vol_B(B)/2$,
 the mass loss to $\bar B$ is at most a $O(1/(\sigma_2 \log \vol(B)))$ fraction of the mass in $B$ each CRD step.
After $O(\log \vol(B))$ CRD steps, $M_j$ reaches $\vol(B)_B/2$, and only a $O(1/\sigma_2)$ fraction of mass
has left $B$ so far. After $O(1)$ more CRD steps,
there will be enough mass to saturate all nodes in $B$, and each of these CRD steps looses
at most a $O(1/\sigma_1)$ fraction of the mass to $\bar B$.
 Thus we loose at most a $O(1/\sigma)$ fraction of mass before all nodes in $B$ are saturated. 

Once the diffusion has saturated all nodes in $B$, the amount of mass in $B$ will be $2\vol(B)$ at the start of every subsequent {CRD} step. At most $\vol(B)\phi(B)/\phi_S(B)\leq O(\vol(B)/\sigma)$ mass can leave $B$, and nodes in $B$ can hold $\vol(B)$ mass, so there must be a lot of excess (in $B$) at the end. Thus, the CRD algorithm will terminate in at most $2$ more CRD steps, since the amount of mass almost stops growing due to excess removal.

At termination, the amount of mass is $\Theta(\vol(B))$, and only $O(1/\sigma)$ fraction of the mass is in $\bar{B}$. Since $S=\{v|m(v)\geq d(v)\}$, and the total mass outside is $O(\vol(B)/\sigma)$, we get claim $(1)$ of the theorem. In our simplified argument, all nodes in $B$ have saturated sinks (i.e., $\vol(B\setminus S)=0$) at termination. We get the small loss in claim $(2)$ when we carry out the argument in more detail.

The amount of mass grows geometrically before the CRD algorithm terminates, so the running time is dominated by the last {CRD} step. The total amount of mass is $O(\vol(B))$ in the last CRD step, and the running time follows Theorem~\ref{thm:CRD-inner} with $\phi=\Theta(\phi_S(B))$.
\end{sproof}
\vspace{-1.25em}
The proof of Theorem~\ref{thm:conductance} is very similar to  Theorem~\ref{thm:recover}, and the conductance guarantee follows directly from Theorem~\ref{thm:CRD-inner}.

\vspace{-3mm}
\section{Empirical Illustration}
\label{sxn:empirical}

We have compared the performance of the CRD algorithm (Algorithm \ref{alg:CRD}), the
Andersen-Chung-Lang local spectral algorithm (ACL) \yrcite{ACL06}, and the flow-improve algorithm (FlowImp)
\citep{RL2008}.  Given a starting node $v_s$ and teleportation
probability $\alpha$, ACL is a local algorithm that computes an approximate
personalized PageRank vector, which is then used to identify local
structure via a sweep cut. FlowImp is a flow-based algorithm that takes
as input a set of reference nodes and finds a cluster around the given
reference set with small conductance value.  Note that we only couple FlowImp with
ACL.  The reason is that, while FlowImp needs a very good reference set as input to give
meaningful results in our setting, it can be used as a ``clean up'' step
for spectral methods, since they give good enough output. Note also that FlowImp has running time that depends on
the volume of the entire graph, as it optimizes a global objective,
while our CRD algorithm takes time linear in the volume of the local
region explored.

\begin{table}[t] 
    \caption{Ground truth clusters}
\vspace{2mm}
  \centering
{\scriptsize
    \begin{tabular}{ccccc}
    graph & feature & volume & nodes & cond.\\
    \midrule
     \multirow{2}{*}{\rotatebox[origin=c]{90}{Hop.}} & $217$ & $10696$ &$200$  &$0.26$ \\
     												    & $2009$& $32454$  &$886$&$0.19$\\\hdashline
     \multirow{2}{*}{\rotatebox[origin=c]{90}{Rice}} & $203$& $43321$ &$403$  & $0.46$\\
     												     & $2009$&  $30858$&$607$  & $0.33$\\\hdashline
     \multirow{2}{*}{\rotatebox[origin=c]{90}{Sim.}} & $2007$& $14424$ & $281$  & $0.47$\\
     												      & $2009$&$11845$  &$277$  & $0.1$\\\hdashline
     \multirow{2}{*}{\rotatebox[origin=c]{90}{\hspace{-0.9cm}Colgate}} & $2006$& $62064$ &$556$  & $0.48$\\
     												   & $2007$& $68381$ & $588$  & $0.41$\\
												    & $2008$& $62429$ & $640$  & $0.29$\\
												    & $2009$& $35369$ & $638$  & $0.11$\\
    \bottomrule
    \end{tabular}%
  \label{tab:clusters}%
}

\end{table}%

We compare these methods on $5$ datasets, one of which is a synthetic grid graph. 
For the $4$ real-world graphs, we use the Facebook college graphs of John Hopkins (Hop.), Rice, Simmons (Sim.), and Colgate, as introduced in \citet{TMP2012}.
 Each graph in the Facebook dataset comes along with some features, e.g., ``dorm $217$,'' and ``class year $2009$.'' 
We consider a set of nodes with the same feature as a ``ground truth'' cluster, e.g., students of year $2009$. We filter out very noisy features via some reasonable thresholds, and we run our computations on the the remaining features. We include the details of the graph and feature selection in Appendix~\ref{sxn:empirical-appendix}. 
The clusters of the features we use are shown in Table \ref{tab:clusters}.

We filter bad clusters from all the ground truth clusters, by setting reasonable thresholds on volume, conductance, and gap (which is the ratio between the spectral gap of the induced graph of cluster, and the cut conductance of the cluster). Details about the selection of the ground truth clusters are in Appendix~ref{sxn:empirical-appendix}. 
In Table \ref{tab:clusters}, we show the size and conductance of the clusters of the features used in our experiments.

For the synthetic experiment, we measure performance by conductance;
the smaller the better.  For real-world experiments, we use precision
and recall.  
We also compare to
ACLopt which ``cheats'' in the sense that {\em it uses ground
  truth} to choose the parameter $\alpha$ with best F1-score
(a combination of precision and recall).  A detailed discussion on parameter tuning of the algorithms is given in Appendix~\ref{sxn:empirical-appendix}.
%

For the synthetic data, we use a grid graph of size $60\times
60$. We add noise to the grid by randomly connecting two vertices.  We
illustrate the performance of the algorithms versus probability of 
random connection in Figure \ref{fig:grid}.  The range of
probabilities was chosen consistent with theory.
As expected, CRD outperforms ACL in the intermediate range, and the two method's performances meet at the endpoints.  One view of this is that the random
connections initially adds noise to the local structure
and eventually destroys it. CRD is more tolerant to this noise
process. 

\begin{figure}[t] 
\centering
  \includegraphics[scale=0.5]{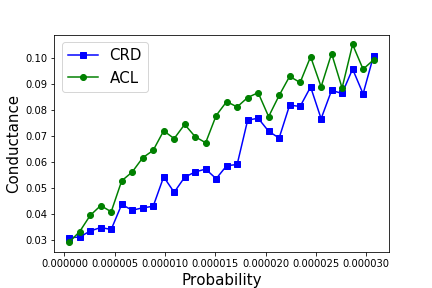}
\vspace{-2mm}
\caption{Average results for $60\times 60$ grid}
\label{fig:grid}
\end{figure}


See Table \ref{tab:results} for results for real-world data.
We run algorithms starting at each vertex in a random sample of half the vertices in each cluster and report the median. 


\begin{table}[t] 
     \caption{Results on Facebook graphs}
\vspace{2mm}
  \centering
{\scriptsize
    \begin{tabular}{ccccccc}
    ID & feat. & mtr. &CRD & ACL & ACLopt & FlowImp\\
    \midrule
     \multirow{4}{*}{\rotatebox[origin=c]{90}{Hopkins}} & \multirow{2}{*}{\rotatebox[origin=c]{0}{$217$}} & Pr. & ${0.92}$ & $0.87$&$0.87$  & ${0.92}$ \\
     								           & & Re. &${0.95}$ &$0.94$ & $0.94$& $0.89$\\
     									   & \multirow{2}{*}{\rotatebox[origin=c]{0}{$2009$}} & Pr. & $0.95$&$0.92$ & $0.91$ &${0.96}$ \\
     								           & & Re. & ${0.97}$& $0.95$& $0.95$&$0.96$\\\hdashline
     \multirow{4}{*}{\rotatebox[origin=c]{90}{Rice}} & \multirow{2}{*}{\rotatebox[origin=c]{0}{$203$}} & Pr. &${0.43}$ & $0.32$& $0.32$ &$0.33$ \\
     								           & & Re. & $0.8$& $0.9$& ${0.9}$& $0.87$\\
     									   & \multirow{2}{*}{\rotatebox[origin=c]{0}{$2009$}} & Pr. & ${0.92}$& $0.25$& $0.25$ & ${0.92}$\\
     								           & & Re. & $0.98$& $0.99$& $0.99$& ${0.99}$\\\hdashline
     \multirow{4}{*}{\rotatebox[origin=c]{90}{Simmons}} & \multirow{2}{*}{\rotatebox[origin=c]{0}{$2007$}} & Pr. &${0.5}$ &$0.49$ & $0.49$ & $0.26$ \\
     								           & & Re. & $0.5$&$0.75$ & $0.75$& ${0.99}$\\
     									   & \multirow{2}{*}{\rotatebox[origin=c]{0}{$2009$}} & Pr. & ${0.96}$& $0.95$& $0.95$ &${0.96}$ \\
     								           & & Re. & ${0.99}$ & ${0.99}$ &${0.99}$ & ${0.99}$ \\\hdashline
     \multirow{8}{*}{\rotatebox[origin=c]{90}{Colgate}} & \multirow{2}{*}{\rotatebox[origin=c]{0}{$2006$}} & Pr. &${0.43}$ &$0.41$ & $0.41$ &$0.22$ \\
     								           & & Re. &$0.53$ & $0.68$& $0.69$ & ${1.0}$\\
     									   & \multirow{2}{*}{\rotatebox[origin=c]{0}{$2007$}} & Pr. &${0.52}$ &$0.47$ & $0.47$ &$0.24$ \\
     								           & & Re. &$0.57$ &$0.71$ & $0.72$&${1.0}$\\
     									   & \multirow{2}{*}{\rotatebox[origin=c]{0}{$2008$}} & Pr. & $0.94$&$0.61$ & $0.64$ &${0.95}$ \\
     								           & & Re. & $0.96$& $0.95$&$0.95$ & ${0.97}$\\
     									   & \multirow{2}{*}{\rotatebox[origin=c]{0}{$2009$}} & Pr. &$0.97$ &$0.93$ & $0.93$ & ${0.98}$\\
     								           & & Re. & $0.98$& $0.98$& $0.98$ & ${0.99}$\\
    \midrule
    \end{tabular}%
}
  \label{tab:results}%
\end{table}%

For clusters with good but not great conductance (e.g., Rice $2009$, Colgate $2008$), CDR outperforms ACL and has
nearly identical performance to FlowImp (which, recall, is a global algorithm). This is a consequence of CDR avoiding the trap
of leaking mass out of the local structure, in contrast to ACL, which leaks a large fraction of mass. For clusters with great conductance, all methods perform very well; and all methods perform poorly when the conductance of the clusters gets close to $0.5$. We include more detailed plots and discussion in Appendix~\ref{sxn:empirical-appendix}.

Here again, as with the synthetic data, we see that for high conductance
sets (which do not have good local structure) and very good conductance
sets (which have excellent local structure), all methods perform
similarly.  In the intermediate range, i.e., when there are moderately good but not very good quality clusters, CDR shows distinct advantages, as
suggested by the theory.

\section*{Acknowledgements} 
SR and DW are supported by the National Science Foundation under Grant CCF-1528174 and CCF-1535989. MM and KF would like to thank the 
Army Research Office 
and the
Defense Advanced Research Projects Agency
for partial support of this work. MH has received funding from the European Research Council under the European Union's Seventh Framework Programme (FP/2007-2013)/ERC Grant Agreement no. 340506.
\bibliography{local_cut}

\begin{thebibliography}{31}
\providecommand{\natexlab}[1]{#1}
\providecommand{\url}[1]{\texttt{#1}}
\expandafter\ifx\csname urlstyle\endcsname\relax
  \providecommand{\doi}[1]{doi: #1}\else
  \providecommand{\doi}{doi: \begingroup \urlstyle{rm}\Url}\fi

\bibitem[Alon et~al.(2007)Alon, Benjamini, Lubetzky, and Sodin]{Alon07}
Alon, Noga, Benjamini, Itai, Lubetzky, Eyal, and Sodin, Sasha.
\newblock Non-backtracking random walks mix faster.
\newblock \emph{Communications in Contemporary Mathematics}, 9\penalty0
  (04):\penalty0 585--603, 2007.

\bibitem[Andersen \& Lang(2008)Andersen and Lang]{RL2008}
Andersen, R. and Lang, K.
\newblock An algorithm for improving graph partitions.
\newblock \emph{{SODA} 2008}, pp.\  651--660, 2008.

\bibitem[Andersen \& Peres(2009)Andersen and Peres]{AP09}
Andersen, Reid and Peres, Yuval.
\newblock Finding sparse cuts locally using evolving sets.
\newblock In \emph{{STOC} 2009}, pp.\  235--244, 2009.

\bibitem[Andersen et~al.(2006)Andersen, Chung, and Lang]{ACL06}
Andersen, Reid, Chung, Fan, and Lang, Kevin.
\newblock Local graph partitioning using {PageRank} vectors.
\newblock In \emph{{FOCS} 2006}, pp.\  475--486, 2006.

\bibitem[Arora et~al.(2009)Arora, Rao, and Vazirani]{ARV04}
Arora, Sanjeev, Rao, Satish, and Vazirani, Umesh.
\newblock Expander flows, geometric embeddings and graph partitioning.
\newblock \emph{Journal of the ACM (JACM)}, 56\penalty0 (2):\penalty0 5, 2009.

\bibitem[Belkin \& Niyogi(2003)Belkin and Niyogi]{BN03}
Belkin, M. and Niyogi, P.
\newblock {L}aplacian eigenmaps for dimensionality reduction and data
  representation.
\newblock \emph{Neural Computation}, 15\penalty0 (6):\penalty0 1373--1396,
  2003.

\bibitem[Cheeger(1969)]{Cheeger69}
Cheeger, Jeff.
\newblock A lower bound for the smallest eigenvalue of the {L}aplacian.
\newblock In \emph{Proceedings of the Princeton conference in honor of
  Professor S. Bochner}, 1969.

\bibitem[Donath \& Hoffman(1973)Donath and Hoffman]{DH73}
Donath, William~E and Hoffman, Alan~J.
\newblock Lower bounds for the partitioning of graphs.
\newblock \emph{IBM Journal of Research and Development}, 17\penalty0
  (5):\penalty0 420--425, 1973.

\bibitem[Gleich(2015)]{Gle15_SIREV}
Gleich, D.~F.
\newblock {PageRank} beyond the web.
\newblock \emph{SIAM Review}, 57\penalty0 (3):\penalty0 321--363, 2015.

\bibitem[Goldberg \& Tarjan(1988)Goldberg and Tarjan]{GT88}
Goldberg, Andrew~V and Tarjan, Robert~E.
\newblock A new approach to the maximum-flow problem.
\newblock \emph{Journal of the ACM (JACM)}, 35\penalty0 (4):\penalty0 921--940,
  1988.

\bibitem[Goldberg \& Tarjan(2014)Goldberg and Tarjan]{GT14}
Goldberg, Andrew~V. and Tarjan, Robert~Endre.
\newblock Efficient maximum flow algorithms.
\newblock \emph{Commun. {ACM}}, 57\penalty0 (8):\penalty0 82--89, 2014.

\bibitem[Guattery \& Miller(1998)Guattery and Miller]{guatterymiller98}
Guattery, S. and Miller, G.L.
\newblock On the quality of spectral separators.
\newblock \emph{SIAM Journal on Matrix Analysis and Applications}, 19:\penalty0
  701--719, 1998.

\bibitem[Jeh \& Widom(2003)Jeh and Widom]{JW03}
Jeh, Glen and Widom, Jennifer.
\newblock Scaling personalized web search.
\newblock In \emph{Proceedings of the 12th international conference on World
  Wide Web}, pp.\  271--279. ACM, 2003.

\bibitem[Jeub et~al.(2015)Jeub, Balachandran, Porter, Mucha, and
  Mahoney]{Jeub15}
Jeub, L.~G.~S., Balachandran, P., Porter, M.~A., Mucha, P.~J., and Mahoney,
  M.~W.
\newblock Think locally, act locally: Detection of small, medium-sized, and
  large communities in large networks.
\newblock \emph{Physical Review E}, 91:\penalty0 012821, 2015.

\bibitem[Kwok et~al.(2013)Kwok, Lau, Lee, Gharan, and Trevisan]{KLLGT13}
Kwok, Tsz~Chiu, Lau, Lap~Chi, Lee, Yin~Tat, Gharan, Shayan~Oveis, and Trevisan,
  Luca.
\newblock Improved cheeger's inequality: analysis of spectral partitioning
  algorithms through higher order spectral gap.
\newblock In \emph{Symposium on Theory of Computing Conference, STOC'13}, pp.\
  11--20, 2013.

\bibitem[Kwok et~al.(2016)Kwok, Lau, and Lee]{KLL16}
Kwok, Tsz~Chiu, Lau, Lap~Chi, and Lee, Yin~Tat.
\newblock Improved cheeger's inequality and analysis of local graph
  partitioning using vertex expansion and expansion profile.
\newblock In \emph{Proceedings of the Twenty-Seventh Annual {ACM-SIAM}
  Symposium on Discrete Algorithms, {SODA}'16}, pp.\  1848--1861, 2016.

\bibitem[Lee et~al.(2012)Lee, Gharan, and Trevisan]{LGT12}
Lee, James~R., Gharan, Shayan~Oveis, and Trevisan, Luca.
\newblock Multi-way spectral partitioning and higher-order cheeger
  inequalities.
\newblock In \emph{Proceedings of the 44th Symposium on Theory of Computing
  Conference, {STOC}'12}, pp.\  1117--1130, 2012.

\bibitem[Leighton \& Rao(1988)Leighton and Rao]{LR88}
Leighton, Tom and Rao, Satish.
\newblock An approximate max-flow min-cut theorem for uniform multicommodity
  flow problems with applications to approximation algorithms.
\newblock In \emph{{FOCS}, 1988}, pp.\  422--431. IEEE, 1988.

\bibitem[Leskovec et~al.(2009)Leskovec, Lang, Dasgupta, and
  Mahoney]{LLDM09_communities_IM}
Leskovec, J., Lang, K.J., Dasgupta, A., and Mahoney, M.W.
\newblock Community structure in large networks: Natural cluster sizes and the
  absence of large well-defined clusters.
\newblock \emph{Internet Mathematics}, 6\penalty0 (1):\penalty0 29--123, 2009.

\bibitem[Louis et~al.(2012)Louis, Raghavendra, Tetali, and Vempala]{LRTV12}
Louis, Anand, Raghavendra, Prasad, Tetali, Prasad, and Vempala, Santosh.
\newblock Many sparse cuts via higher eigenvalues.
\newblock In \emph{Proceedings of the Forty-fourth Annual ACM Symposium on
  Theory of Computing}, STOC '12, pp.\  1131--1140, 2012.

\bibitem[Mahoney et~al.(2012)Mahoney, Orecchia, and Vishnoi]{MOV12_JMLR}
Mahoney, M.~W., Orecchia, L., and Vishnoi, N.~K.
\newblock A local spectral method for graphs: with applications to improving
  graph partitions and exploring data graphs locally.
\newblock \emph{Journal of Machine Learning Research}, 13:\penalty0 2339--2365,
  2012.

\bibitem[Ng et~al.(2001)Ng, Jordan, and Weiss]{NJW01_spectral}
Ng, A.Y., Jordan, M.I., and Weiss, Y.
\newblock On spectral clustering: Analysis and an algorithm.
\newblock In \emph{NIPS '01: Proceedings of the 15th Annual Conference on
  Advances in Neural Information Processing Systems}, 2001.

\bibitem[Orecchia et~al.(2012)Orecchia, Sachdeva, and Vishnoi]{OSV12}
Orecchia, Lorenzo, Sachdeva, Sushant, and Vishnoi, Nisheeth~K.
\newblock Approximating the exponential, the {L}anczos method and an
  $\tilde{O}(m)$-time spectral algorithm for balanced separator.
\newblock In \emph{{STOC} 2012}, pp.\  1141--1160. ACM, 2012.

\bibitem[Page et~al.(1999)Page, Brin, Motwani, and Winograd]{PB99}
Page, Lawrence, Brin, Sergey, Motwani, Rajeev, and Winograd, Terry.
\newblock The pagerank citation ranking: Bringing order to the web.
\newblock Technical report, Stanford InfoLab, 1999.

\bibitem[Shi \& Malik(2000)Shi and Malik]{ShiMalik00_NCut}
Shi, J. and Malik, J.
\newblock Normalized cuts and image segmentation.
\newblock \emph{IEEE Transcations of Pattern Analysis and Machine
  Intelligence}, 22\penalty0 (8):\penalty0 888--905, 2000.

\bibitem[Shun et~al.(2016)Shun, Roosta-Khorasani, Fountoulakis, and
  Mahoney]{SRFM16_VLDB}
Shun, J., Roosta-Khorasani, F., Fountoulakis, K., and Mahoney, M.~W.
\newblock Parallel local graph clustering.
\newblock \emph{Proceedings of the VLDB Endowment}, 9\penalty0 (12):\penalty0
  1041--1052, 2016.

\bibitem[Spielman \& Teng(2004)Spielman and Teng]{ST04}
Spielman, Daniel~A. and Teng, Shang{-}Hua.
\newblock Nearly-linear time algorithms for graph partitioning, graph
  sparsification, and solving linear systems.
\newblock In \emph{{STOC} 2004}, pp.\  81--90, 2004.

\bibitem[Traud et~al.(2012)Traud, Mucha, and Porter]{TMP2012}
Traud, A.~L., Mucha, P.~J., and Porter, M.~A.
\newblock Social structure of facebook networks.
\newblock \emph{Physica A: Statistical Mechanics and its Applications},
  391\penalty0 (16):\penalty0 4165--4180, 2012.

\bibitem[von Luxburg(2006)]{luxburg05_survey}
von Luxburg, U.
\newblock A tutorial on spectral clustering.
\newblock Technical Report 149, Max Plank Institute for Biological Cybernetics,
  August 2006.

\bibitem[White \& Smyth(2005)White and Smyth]{WS05_spectralSDM}
White, S. and Smyth, P.
\newblock A spectral clustering approach to finding communities in graphs.
\newblock In \emph{SDM '05: Proceedings of the 5th SIAM International
  Conference on Data Mining}, pp.\  76--84, 2005.

\bibitem[Zhu et~al.(2013)Zhu, Lattanzi, and Mirrokni]{ALM13}
Zhu, Zeyuan~Allen, Lattanzi, Silvio, and Mirrokni, Vahab~S.
\newblock A local algorithm for finding well-connected clusters.
\newblock In \emph{{ICML} 2013}, pp.\  396--404, 2013.

\end{thebibliography}
\bibliographystyle{icml2017}
\begin{appendices}

\section{CRD Inner Procedure}
\label{sxn:crd-appendix}
We first fill in the missing details in the {\em CRD-inner} subroutine (Algorithm~\ref{alg:crd-inner}).  

Note an ineligible arc $(v,u)$ must remain ineligible until the next relabel of $v$, so we only need to check each arc out of $v$ once between consecutive relabels. We use $\current(v)$ to keep track of the arcs out of $v$ that we have checked since the last relabel of $v$.
\noindent 
\begin{algorithm}
\caption{{\em CRD-inner}($G$,$m(\cdot)$,$\phi$) }
\label{alg:crd-inner}
~\\
\fbox{
\parbox{0.47\textwidth}{
\tab {\bf Initialization:}\\
\tab \tab $\forall \{v,u\}\in E$, $m(u,v)=m(v,u)=0$.\\
\tab \tab $Q=\{v|m(v)>d(v)\}$, $h=\frac{3\log|m(\cdot)|}{\phi}$\\
\tab \tab $\forall v$, $l(v)=0$, and $\current(v)$ is the first edge in\\
\phantom{\tab \tab} $v$'s list of incident edges.\\
\tab {\bf While} $Q$ is not empty\\
\tab \tab Let $v$ be the lowest labeled vertex in $Q$.\\
\tab \tab {\em Push/Relabel}$(v)$. \\
\tab \tab {\bf If} {\em Push/Relabel}$(v)$ pushes mass along $(v,u)$\\
\tab \tab \tab {\bf If} $v$ becomes in-active, {\em Remove}$(v,Q)$\\
\tab \tab \tab {\bf If} $u$ becomes active, {\em Add}$(u,Q)$\\
\tab \tab {\bf Else If} {\em Push/Relabel}$(v)$ increases $l(v)$ by $1$\\
\tab \tab \tab {\bf If} $l(v)<h$, {\em Shift}$(v,Q)$\\
\tab \tab \tab {\bf Else} {\em Remove}$(v,Q)$
}}
\fbox{
\parbox{0.47\textwidth}{
{\em Push/Relabel}$(v)$ \\
\tab Let $\{v,u\}$ be $\current(v)$.\\
\tab {\bf If} arc $(v,u)$ is eligible, then {\em Push}$(v,u)$.\\
\tab {\bf Else}\\
\tab \tab {\bf If} $\{v,u\}$ is not the last edge in $v$'s list of edges.\\
\tab \tab \tab Set $\current(v)$ be the next edge of $v$.\\
\tab \tab {\bf Else} (i.e., $\{v,u\}$ is the last edge of $v$)\\
\tab \tab \tab {\em Relabel}$(v)$, and set $\current(v)$ be the first\\
\phantom{\tab \tab \tab} edge of $v$'s list of edges.
}}
\fbox{
\parbox{0.47\textwidth}{
{\em Push}$(v,u)$\\
\tab {\bf Assertion}: $r_m(v,u)>0, l(v)\geq l(u)+1$.\\
\phantom{\tab {\bf Assertion}:} $\ex(v)>0, m(u)<2d(u)$.\\
\tab $\psi = \min\left(\ex(v),r_m(v,u),2d(u)-m(u)\right)$\\
\tab Send $\psi$ units of mass from $v$ to $u$:\\
\phantom{\tab} $m(v,u)\leftarrow m(v,u)+\psi, m(u,v)\leftarrow m(u,v)-\psi$.\\
\phantom{\tab} $m(v)\leftarrow m(v)-\psi, m(u)\leftarrow m(u)+\psi$.
}}
\fbox{
\parbox{0.47\textwidth}{
{\em Relabel}$(v)$\\
\tab {\bf Assertion}: $v$ is active, and $\forall u\in V$,\\
\phantom{\tab {\bf Applicability}: }$r_m(v,u)>0\implies l(v)\leq l(u)$.\\
\tab $l(v)\leftarrow l(v)+1$.
}}
\end{algorithm}
We always pick an active vertex $v$ with the lowest label. Then for any eligible arc $(v,u)$, we know $m(u)\leq d(u)$
, so we can push at least $1$ along $(v,u)$ (without violating $m(u)\leq 2d(u)$), which is crucial to bound the total work. 

We keep the list $Q$ in non-decreasing order of the vertices' labels, for efficient look-up of the lowest labeled active vertex, and {\em Add, Remove, Shift} are the operations to maintain this order. Note these operations can be implemented to take $O(1)$ work. In particular, when we add a node $u$ to $Q$, it will always be the active node with lowest label, so will be put at the beginning. We only remove the first element $v$ from $Q$, and when we shift a node $v$ in $Q$, we know $l(v)$ increases by exactly $1$. To maintain $Q$, we simply need to pick two linked lists, one containing all the active nodes with non-decreasing labels, and another linked list containing one pointer for each label value, as long as there is some active node with that label, and the pointer contains the position of first such active node in $Q$. Maintaining this two lists together can give $O(1)$ time {\em Add, Remove, Shift}.

Now we proceed to prove the main theorem of {\em CRD-inner}.
\crdinner*
\begin{proof}
Let $l(\cdot)$ be the labels of vertices at termination, and let $B_i = \{v| l(v) = i\}$. We make the following observations: $l(v)=h\Rightarrow 2d(v)\geq m(v)\geq d(v)$; $h>l(v)\geq 1\Rightarrow m(v)=d(v)$; $l(v)=0 \Rightarrow m(v)\leq d(v)$.

Since $|m(\cdot)|\leq \vol(G)$, if $B_0=\emptyset$, it must be $|m(\cdot)|=\vol(G)$, and every $v$ has $m(v)=d(v)$, so we get case $(1)$. If $B_h = \emptyset$,  we also get case $(1)$.

If $B_h,B_0 \ne \emptyset$, let $S_i = \cup_{j=i}^h B_j$ be the set of nodes with label at least $i$. We have $h$ level cuts $S_h,\ldots,S_1$, where $\vol(S_h)\geq 1$, and $S_j\subseteq
S_i$ if $j>i$. We claim one of these level cuts must have conductance $O(\phi)$. For any $S_i$, we divide the edges from $S_i$ to $\overline{S_i}$ into two groups: $1)$ edge across one level (i.e., from node in $B_i$ to node in $B_{i-1}$), and $2)$ edges across more than one level. Let $z_1(i),z_2(i)$ be the number of edges in the two groups respectively, and define $\phi_g(i)\myeq z_g(i)/\vol(S_i)$ for $g=1,2$. 

First we show that, there must be a $i^*$ between $h$ and $h/2$ such that $\phi_1(i^*)\leq \phi$. By contradiction, if $\phi_1(i)>\phi$ for all $i=h,\ldots,h/2$, since $\vol(S_{i-1})\geq \vol(S_i)(1+\phi_1(S_i))$, we get $\vol(S_{h/2})\geq (1+\phi)^{h/2}\vol(S_h)$. With $h=3\log|m(\cdot)|/\phi$, we have $\vol(S_{h/2})\geq \Omega(|m(\cdot)|^{3/2})$, and since nodes in $S_{h/2}$ are all saturated, we get a contradiction since we must have $\vol(S_{h/2})\leq |m(\cdot)|$.

Now we consider any edge $\{v,u\}$ counted in $z_2(i^*)$ (i.e., $v\in S_{i^*},u\in\overline{S_{i^*}},l(v)-l(u)\geq 2$). Since $i^*\geq h/2 > 1/\phi$, $\hat{c}(v,u)=1/\phi$. $l(v)-l(u)>2$ suggests $r_m(v,u)=0$, thus $m(v,u)=1/\phi$ (i.e., $1/\phi$ mass pushed out of $S_{i^*}$ along each edge counted in $z_2(i^*)$). Each edge counted in $z_1(i^*)$ can have at most $1/\phi$ mass pushed into $S_{i^*}$, and at most $2\vol(S_{i^*})$ mass can start in $S_{i^*}$, then we know
\[
z_2(i^*)/\phi \leq z_1(i^*)/\phi + 2 \vol(S_{i^*})
\]
We will let $A$ be $S_{i^*}$, and we have 
\[
\phi(A)=\frac{z_1(i^*)+z_2(i^*)}{\vol(S_{i^*})}\leq 4\phi = O(\phi)
\]
Here we assume $S_{i^*}$ is the smaller side of the cut to compute the conductance. If this is not the case, i.e. $\vol(S_{i^*})> \vol(G)/2$, we just carry out the same argument as above, but run the region growing argument from level $h/4$ up to level $h/2$, and get a low conductance cut, and still let $A$ to be the side containing $S_h$.
The additional properties of elements in $A$ follows from $S_h\subseteq A\subseteq S_{h/4}$.

Now we proceed to the running time. The initialization takes $O(|m(\cdot)|)$. Subsequently, each iteration takes $O(1)$ work. We will first attribute the work in each iteration to either a push or a relabel. Then we will charge the work on pushes and relabels to the absorbed mass, such that each unit of absorbed mass gets charged $O(h)$ work. Recall the absorbed mass at $v$ are the first up to $d(v)$ mass starting at or pushed into $v$, and these mass never leave $v$, as the algorithm only pushes excess mass. This will prove the result, as there are at most $|m(\cdot)|$ units of (absorbed) mass in total.

In each iteration of {\em Unit-Flow}, the algorithm picks a lowest labeled active node $v$. If {\em Push/Relabel}$(v)$ ends with a push of $\psi$ mass, we charge $O(\psi)$ to that push operation. Since $\psi\geq 1$, we charged the push enough to cover the work in that iteration. If the call to {\em Push/Relabel}$(v)$ doesn't push, we charge the $O(1)$ work of the iteration to the next relabel of $v$ (or the last relabel if there is no next relabel). 
The latter can happen at most $d(v)$ times between consecutive relabels of $v$, so each relabel of $v$ is charged $O(d(v))$ work.

We now charge the work on pushes and relabels to the absorbed mass. Note each time we relabel $v$, there are $d(v)$ units of absorbed mass at $v$, so we charge the $O(d(v))$ work on the relabel to the absorbed mass, and each unit gets charged $O(1)$. There is at most $h$ relabels of $v$, so each unit of absorbed mass is charged $O(h)$ in total by all the relabels.

For the work on pushes, we consider the potential function $\Lambda=\sum_v \ex(v)l(v)$.
$\Lambda$ is always non-negative, and as we only push excess mass downhill, each push of $\psi$ units of mass decrease $\Lambda$ by at least $\psi$, so we can charge the work on pushes to the increment of $\Lambda$. It only increases at relabel. When we relabel $v$, $\Lambda$ is increased by $\ex(v)$. Since $\ex(v)\leq d(v)$, we can charge $O(1)$ to each unit of absorbed mass at $v$ to cover $\Lambda$'s increment. In total we can charge all pushes (via $\Lambda$) to absorbed mass, and each unit is charged with $O(h)$. 

If we need to compute the cut $A$ in case $(2)$, the running time is $O(\vol(S_1))$, which is $O(|m(\cdot)|)$.
\end{proof}

\section{Local Clustering}
\label{sxn:clustering-appendix}
Recall we assume $B$ to satisfy the following conditions.
\assumone*
\vspace{-0.5em}
\assumtwo*
Now we proceed to prove the main lemma.
\leakage*
\vspace{-1.5em}
\begin{proof}
For simplicity, we assume once a unit of mass leaves $B$, it is never routed back. Intuitively, mass coming back into $B$ should only help the algorithm, and indeed the results don't change without this assumption. We denote $|M_j(S)|$ as the amount of mass on nodes in a set $S$ at the start of the {\em CRD-inner} call.

We have two cases, corresponding to whether the diffusion already spread a lot of mass over $B$. If $M_j \geq \vol_B(B)/2$, we use the upperbound $1/\phi$ that is enforced on the net mass over any edge to limit the amount of mass that can leak out. In particular $L_j \leq O(\vol(B)\phi(B)/\phi_S(B))$, since there are $\vol(B)\phi(B)$ edges from $B$ to $\bar{B}$, and $\phi=\Theta(\phi_S(B))$ in {\em CRD-inner}. As $M_j \geq \Omega(\vol(B))$, we have
$L_j\leq O(\frac{1}{\sigma_1})\cdot M_j$.

The second case is when $M_j \leq \vol_B(B)/2$. In this case, a combination of Assumption \ref{assum:connectmore} and capacity releasing controls the leakage of mass. Intuitively, there are still many nodes in $B$ that the diffusion can spread mass to. For the nodes in $B$ with excess on them, when they push their excess, most of the downhill directions go to nodes inside $B$. As a consequence of capacity releasing, only a small fraction of mass will leak out.

In particular, let $l(\cdot)$ be the labels on nodes when {\em CRD-inner} finishes, we consider $B_i=\{v\in B|l(v)=i\}$ and the level cuts $S_i=\{v\in B|l(v)\geq i\}$ for $i=h,\ldots,1$. As $M_j\leq \vol_B(B)/2$, we know $\vol(S_h)\leq \vol(S_{h-1})\leq \ldots\leq \vol(S_1)\leq \vol_B(B)/2$. In this case, we can use Assumption~\ref{assum:connectmore} on all level cuts $S_h,\ldots,S_1$. Moreover, for a node $v\in B_i$, the "`effective"' capacity of an arc from $v$ to $\bar{B}$ is $\min(i,1/\phi)$. Formally, we can bound $L_j$ by the total (effective) outgoing capacity, which is 
\vspace{-0.5em}
\begin{equation}
\label{eqn:outcap}
\sum_{i=1}^h |E(B_i,\bar{B})|\cdot \min(i,\frac{1}{\phi}) = \sum_{i=1}^{\frac{1}{\phi}} |E(S_i,\bar{B})|
\end{equation}
where $h$ is the bound on labels used in unit flow.

We design a charging scheme to charge the above quantity (the right hand side) to the mass in $\Delta_j(B)$, such that each unit of mass is charged $O(1/(\sigma_2\log \vol(B)))$. It follows that $L_j\leq O(\frac{1}{\sigma_2\log \vol(B)})\cdot |\Delta_j(B)|$. 

Recall that, $|E(S_i,\bar{B})|\leq \frac{|E(S_i,B\setminus S_i)|}{\sigma_2\log \vol(B)\log(1/\phi)}$ from Assumption~\ref{assum:connectmore}. We divide edges in $E(S_i,B\setminus S_i)$ into two groups: : $1)$ edges across one level, and $2)$ edges across more than one level. Let $z_1(i),z_2(i)$ be the number of edges in the two groups respectively.

If $z_1(i) \geq |E(S_i,B\setminus S_i)|/3$, we charge $3/(\sigma_2\log \vol(B)\log(1/\phi))$ to each edge in group $1$. These edges in turn transfer the charge to the absorbed mass at their endpoints in $B_i$. Since each node $v$ in level $i\geq 1$ has $d(v)$ absorbed mass, each unit of absorbed mass is charged $O(1/(\sigma_2\log \vol(B)\log(1/\phi)))$. Note that the group $1$ edges of different level $i$'s are disjoint, so each unit of absorbed mass will only be charged once this way.

If $z_1(i) \leq |E(S_i,B\setminus S_i)|/3$, we know $z_2(i)-z_1(i) \geq |E(S_i,B\setminus S_i)|/3$. Group $2$ edges in total send at least $(i-1)z_2(i)$ mass from $S_i$ to $B\setminus S_i$, and at most $(i-1)z_1(i)$ of these mass are pushed into $S_i$ by group $1$ edges. Thus, there are at least $(i-1)|E(S_i,B\setminus S_i)|/3$ mass that start in $S_i$, and are absorbed by nodes at level below $i$ (possibly outside $B$). In particular, this suggests $|M_j(S_i)|\geq (i-1)|E(S_i,B\setminus S_i)|/3$, and we split the total charge $|E(S_i,\bar{B})|$ evenly on these mass, so each unit of mass is charged $O(1/(i\sigma_2\log \vol(B)\log(1/\phi)))$. Since we sum from $i=1/\phi$ to $1$ in (RHS of) Eqn~\eqref{eqn:outcap}, we charge some mass multiple times (as $S_i$'s not disjoint), but we can bound the total charge by $\sum_{i=1}^{1/\phi} \frac{1}{i}\cdot O(1/(\sigma_2\log \vol(B)\log(1/\phi)))$, which is $O(1/(\sigma_2\log \vol(B)))$. This completes the proof.
\end{proof}

Now we fill in some details for the proof of Theorem~\ref{thm:recover}.
\recover*
\begin{proof}
Since we use $\phi=\Theta(\phi_S(B))$ when we call {\em CRD-inner}, we are guaranteed that {\em CRD-inner} will make enough effort to get through bottleneck of conductance $\Omega(\phi_S(B))$, so the diffusion should be able to spread mass completely over $B$, since any cut inside $B$ has conductance at least $\phi_S(B)$. Thus, there should be no excess remaining on nodes in $B$ at the end of {\em CRD-inner}, unless every node $v$ in $B$ has $m(v)$ mass already (i.e., all nodes in $B$ are {\em saturated}). 

Formally, consider the proof of Theorem~\ref{thm:CRD-inner}, but with everything with respect to the induced graph of $B$. If there is no excess pushed into $B$ from outside, we can use the exact arguments in the proof of Theorem~\ref{thm:CRD-inner} to show that either there is no excess on any node in $B$ at the end of the {\em CRD-inner} call, or there is a cut of conductance $O(\phi)$, or all nodes in $B$ are saturated. By assumption, the second case is not possible, so we won't remove excess supply between {\em CRD-inner} calls before all nodes in $B$ are saturated. If we consider supply pushed back into $B$ from outside, we can show that the amount of excess on nodes in $B$ at the end is at most the amount of mass pushed into $B$. Since we already counted the mass leaving $B$ as lost, we don't need to worry about them again when we remove the mass.

Consequently, before all nodes in $B$ are saturated, the amount of mass in $B$ only decreases (compared to the supposed $2d(v_s)\cdot 2^j$ amount in iteration $j$) due to mass leaving $B$, which we can bound ( Lemma~\ref{lemma:leakage}) by a $O(1/(\sigma\log \vol(B)))$ fraction of the mass in $B$ each iteration. We will have enough mass to spread over all nodes in $B$ in $O(\log \vol(B))$ iterations, so we lose $O(1/\sigma)$ fraction of mass before all nodes in $B$ are saturated. 

Once the diffusion has saturated all nodes in $B$, the amount of mass in $B$ will be $2\vol(B)$ at the start of every subsequent {\em CRD-inner} call. At most $\vol(B)\phi(B)/\phi_S(B)\leq O(\vol(B)/\sigma)$ mass can leave $B$, and nodes in $B$ can hold $\vol(B)$ mass, so there must be a lot of excess (in $B$) at the end. Thus, the CRD algorithm will terminate in at most $2$ more iterations after all nodes in $B$ are saturated, since the amount of mass almost stops growing due to excess removal.

At termination, the amount of mass is $\Theta(\vol(B))$, and only $O(1/\sigma)$ fraction of the mass is in $\bar{B}$. Since $S=\{v|m(v)\geq d(v)\}$, and the total mass outside is $O(\vol(B)/\sigma)$, we get claim $(1)$ of the theorem. In our simplified argument, all nodes in $B$ have saturated sinks (i.e., $\vol(B\setminus S)=0$) at termination. We get the small loss in claim $(2)$ when we carry out the argument more rigorously.

The amount of mass grows geometrically before the CRD algorithm terminates, so the running time is dominated by the last {\em CRD-inner} call. The total amount of mass is $O(\vol(B))$ in the last iteration, and the running time follows Theorem~\ref{thm:CRD-inner} with $\phi=\Theta(\phi_S(B))$
\end{proof}
%
%
\section{Empirical Set-up and Results}
\label{sxn:empirical-appendix}
\subsection{Datasets}\label{subsec:datasets}
We chose the graphs of John Hopkins, Rice, Simmons and Colgate universities/colleges. 
The actual IDs of the graphs in Facebook100 dataset are Johns\_Hopkins55, Rice31, Simmons81 and Colgate88.
These graphs are anonymized Facebook graphs on a particular day in September 2005 for student social networks. 
The graphs are unweighted and they represent ``friendship ties''. The data form a subset of the Facebook100 dataset from \cite{TMP2012}. 
We chose these $4$ graphs out of $100$ due to their large assortativity value in the first column of Table A.2 in \cite{TMP2012},
where the data were first introduced and analyzed. Details about the graphs are shown is Table \ref{tab:data_collection}.
\begin{table}[htbp]
  \centering
{\scriptsize    \begin{tabular}{cccc}
    \toprule
    Graph & volume & nodes & edges \\
    \midrule
     John Hopkins& 373144 &5157 &186572 \\
     Rice& 369652 & 4083&184826 \\
     Simmons& 65968 & 1510&32984\\
     Colgate & 310086 & 3482&155043\\ 
    \bottomrule
    \end{tabular}%
    \caption{Graphs used for experiments.}
  \label{tab:data_collection}%
  }
\end{table}%

Each graph in the Facebook dataset comes along with $6$ features, i.e., second major, high school, gender, dorm, major index and year. 
We construct ``ground truth'' clusters by using the features for each node.
In particular, we consider nodes with the same value of a feature to be a cluster, e.g., students of year $2009$. We loop over all possible clusters
and consider as ground truth the ones that have volume larger than $1000$, conductance smaller than $0.5$ and gap larger than $0.5$.
Filtering results in moderate scale clusters for which the internal volume is at least twice as much as the volume of the edges that leave the cluster. 
Additionally, gap at least $0.5$ means that the smallest nonzero eigenvalue of the normalized Laplacian of the subgraph defined by the cluster is at least twice larger than the conductance of the cluster 
in the whole graph. The clusters per graph that satisfy the latter constraints are shown in Table \ref{tab:clusters}. 

We resort to social networks as our motivation is to test our algorithm against "noisy" clusters, and for social networks it is well known that reliable ground truth is only weakly related to good conductance clusters, and thus certain commonly-used notions of ground truth would not provide falsifiable insight into the method~\cite{Jeub15}. As analyzed in the original paper that introduced these datasets~\cite{TMP2012}, only year and dorm features give non-trivial "assortativity coefficients", which is a "local measure of homophily". This agrees with the ground truth clusters we find, which also correspond to features of year and dorm. 

\subsection{Performance criteria and parameter tuning}\label{subsec:param_tune}
For real-world Facebook graphs since we calculate the ground truth clusters in Table \ref{tab:clusters} then we measure performance by calculating precision and recall
for the output clusters of the algorithms.

We set the parameters of CRD to $\phi=1/3$ for all experiments.
At each iteration we use sweep cut on the labels returned by the {\em CRD-inner} subroutine to find a cut of small conductance, 
and over all iterations of CRD we return the cluster with the lowest conductance. 

ACL has two parameters, the teleportation parameter $\alpha$ and a tolerance parameter $\epsilon$. Ideally the former should be set according to the reciprocal of the mixing time 
of a a random walk within the target cluster, which is equal to the smallest nonzero eigenvalue of the normalized Laplacian for the subgraph that corresponds to the target cluster.
Let us denote the eigenvalue with $\lambda$. 
In our case the target cluster is a ground truth cluster from Table \ref{tab:clusters}. We use this information to set parameter $\alpha$. In particular, for each node in the clusters in Table \ref{tab:clusters}
we run ACL $4$ times where $\alpha$ is set based on a range of values in $[\lambda/2,2\lambda]$ with a step of $(2\lambda - \lambda/2)/4$. The tolerance parameter $\epsilon$ is set to $10^{-7}$ for all
experiments in order to guarantee accurate solutions for the PageRank linear system. For each parameter setting we use sweep cut to find a cluster of low conductance, and over all parameter 
settings we return the cluster with the lowest conductance value as an output of ACL.

For real-world experiments we show results for ACLopt. In this version of ACL, for each parameter setting of $\alpha$ we use sweep cut algorithm to obtain a low conductance cluster and then we compute its precision and recall.
Over all parameter settings we keep the cluster with the best F1-score; a combination of precision and recall. This is an extra level of supervision for the selection of the teleportation parameter $\alpha$, which is not possible in practice since it requires ground truth information.
However, the performance of ACLopt demonstrates the performance of ACL in case that we could make optimal selection of parameter $\alpha$ among the given range of parameters (which also includes ground truth information) for the
precision and recall criteria.

Finally, we set the reference set of FlowI to be the output set of best conductance of ACL out of its $4$ runs for each node.
By this we aim to obtain an improved cluster to ACL in terms of conductance. Note that FlowI is a global algorithm, which means 
that it accesses the information from the whole graph compared to CRD and ACL which are local algorithms.


\subsection{Real-world experiments}
For clusters in Table \ref{tab:clusters} we sample uniformly at random 
half of their nodes. For each node we run CRD, ACL and ACL+FlowI. We report the results using box plots, which graphically summarizes groups of numerical data using quartiles.
In these plot the orange line is the median, the blue box below the median is the first quartile, the blue box above the median is the third quartile, the extended long lines below and above the box are the maximum and minimum values and the circles are outliers.

The results for John Hopkins university are shown in Figure \ref{fig:john_hopkins}. 
Notice in this figure that CRD performs better than ACL and ACLopt, which both use ground truth 
information, see parameter tuning in Subsection \ref{subsec:param_tune}. CRD performs similarly to ACL+FlowI, where FlowI is a global algorithm, but CRD is a local algorithm.
Overall all methods have large medians for this graph because the clusters with dorm $217$ and year $2009$ are clusters with low conductance compared to the ones in other universities/colleges which we will discuss 
in the remaining experiments of this subsection.

The results for Rice university are shown in Figure \ref{fig:rice}. Notice that both clusters of dorm $203$ and year $2009$ for Rice university are worse in terms of conductance compared to the clusters 
of John Hopkins university. Therefore the performance of the methods is decreased. For the cluster of dorm $203$ with conductance $0.46$ CRD has larger median than ACL, ACLopt and ACL+Flow in terms of precision.
The latter methods obtain larger median for recall, but this is because ACL leaks lots of probability mass outside of the ground truth cluster since as indicated by its large conductance value many nodes in this cluster are connected 
externally. For cluster of year $2009$ CRD outperforms ACL, which fails to recover the cluster because it leaks mass outside the cluster, FlowI corrects the problem and locates the correct cluster at the expense of touching the whole graph.
Notice that all methods have a significant amount of variance and outliers, which is also explained by the large conductance values of the clusters.

The results for Simmons college are shown in Figure \ref{fig:simmons}. Notice that Simmons college in Table \ref{tab:clusters} has two clusters, one with poor conductance $0.47$ for students 
of year $2007$ and one low conductance $0.1$ for students of year $2009$. The former with conductance $0.47$ means that the internal volume is nearly half the volume of the outgoing edges. 
This has a strong implication in the performance of CRD, ACL and ACLopt which get median precision about $0.5$. This happens because the methods push half of the flow (CRD) and half of the probability mass (ACL) outside the ground truth cluster, 
which results in median precision $0.5$. ACL achieves about $20\%$ more (median) recall than CRD but this is because ACL touched more nodes than CRD during execution of the algorithm. Notice that ACL+FlowI fails for the cluster of year $2007$, this is because
FlowI is a global algorithm, hence it finds a cluster that has low conductance but it is not the ground truth cluster. The second cluster of year $2009$ has low conductance hence all methods have large median performance with CRD being slightly better than ACL, ACLopt and ACL+FlowI.

The results for Colgate university are shown in Figure \ref{fig:colgate}. The interesting property of the clusters in Table \ref{tab:clusters} for Colgate university is that their conductance varies from low $0.1$ to large $0.48$. 
Therefore in Figure \ref{fig:colgate} we see a smooth transition of performance for all methods from poor to good performance. In particular, for the cluster of year $2006$ the conductance is $0.48$ and CRD, ACL and ACLopt
perform poorly by having median precision about $50\%$, recall is slightly better for ACL but this is because we allow it touch a bigger part of the graph. ACL+FlowI fails to locate the cluster. For the cluster of year $2007$ the conductance 
is $0.41$ and the performance of CRD, ACL and ACLopt is increased with CRD having larger (median) precision and ACL having larger (median) recall as in the previous cluster. Conductance is smaller for the cluster of year $2008$, for which 
we observe substantially improved performance for CRD with large median precision and recall. On the contrary, ACL, ACLopt and ACL+FlowI have nearly $30\%$ less median precision in the best case and similar median recall, but only because
a large amount of probability mass is leaked and a big part of the graph is touched which includes the ground truth cluster. Finally, the cluster of year $2009$ has low conductance $0.11$ and all methods have good performance 
for precision and recall.
 
\begin{figure}
\centering
  \includegraphics[scale=0.59]{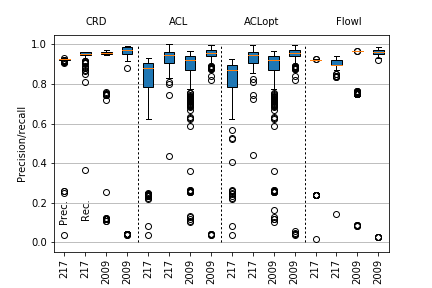}
\caption{Precision and recall results for John Hopkins university}
\label{fig:john_hopkins}
\end{figure}
\begin{figure}
\centering
  \includegraphics[scale=0.59]{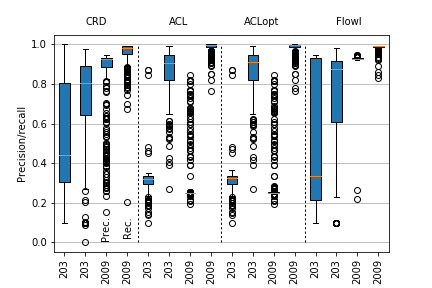}
\caption{Precision and recall results for Rice university}
\label{fig:rice}
\end{figure}
\begin{figure}
\centering
  \includegraphics[scale=0.59]{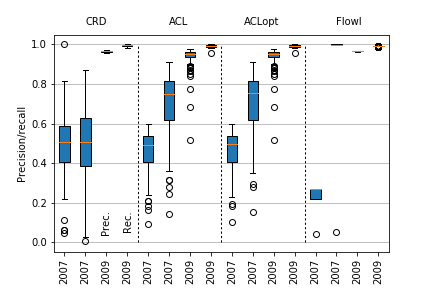}
\caption{Precision and recall results for Simmons college}
\label{fig:simmons}
\end{figure}
\begin{figure}
\centering
  \includegraphics[scale=0.59]{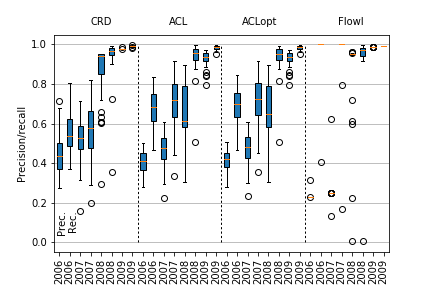}
\caption{Precision and recall results for Colgate university}
\label{fig:colgate}
\end{figure}
\end{appendices}
\end{document}